\def\BibTeX{{\rm B\kern-.05em{\sc i\kern-.025em b}\kern-.08em
    T\kern-.1667em\lower.7ex\hbox{E}\kern-.125emX}}
\newcommand{\QED}{$\square\,$}
\theoremstyle{definition}
\newtheorem{definition}{Definition}
\newtheorem{theorem}{Theorem}
\newtheorem{lemma}{Lemma}
\newtheorem{protocol}{Protocol}
\newtheorem{property}{Property}
\newtheorem{proposition}{Proposition}
\renewenvironment{proof}[1][Proof: ]{\vspace{7pt}\noindent{\bf #1}\small}{\QED}
\newcommand{\bra}[1]{ \left\lvert #1\right\rangle}
\newcommand{\ket}[1]{ \left\langle #1\right\lvert}
\newcommand{\innerpro}[2]{ \left\langle #1\lvert #2\right\rangle}
\newcommand{\outerpro}[2]{ \bra{#1}\ket{#2}}
\newcommand{\floor}[1]{\left \lfloor #1 \right \rfloor}
\newcommand{\ceil}[1]{\left \lceil #1 \right \rceil}
\newcommand{\set}[1]{\left \{ #1 \right \}}
\newcommand{\intset}[1]{\left [ #1 \right ]}
\newcommand{\odd}[1]{\left [ #1 \right ]^o}
\newcommand{\cint}[1]{ 1,2,\cdots, #1}
\newcommand{\abs}[1]{\left \lvert#1\right\lvert}
\newcommand{\sfrac}[1]{\frac{1}{\sqrt{#1}}}
\newcommand{\SWAP}{\mathcal{SWAP}}
\newcommand{\QFT}{\mathcal{QFT}}
\newcommand{\QFIT}{\mathcal{QFT}^\dagger}
\newcommand{\PROT}{\mathcal{P}}
\newcommand{\CPROT}{\mathcal{CP}}
\newcommand{\Toffoli}{\mathcal{T}}
\newcommand{\BROT}{\mathcal{BROT}}
\newcommand{\ROT}{\mathcal{ROT}}
\newcommand{\MUL}{\mathcal{MUL}}
\newcommand{\BMUL}{\mathcal{BMUL}}
\newcommand{\SUM}{\mathcal{SUM}}
\newcommand{\BSUM}{\mathcal{BSUM}}
\newcommand{\XOR}{\mathcal{XOR}}
\newcommand{\CNOT}{\mathcal{CNOT}}
\renewcommand{\H}{\mathcal{H}}
\newcommand{\U}{\mathcal{U}}
\newcommand{\onarrow}[1]{\overset{#1}{\longrightarrow}}
\renewcommand{\mod}{\ \mathrm{mod}\ }
\begin{document}
\title{Secure and Efficient Two-party Quantum Scalar Product Protocol With Application to Privacy-preserving Matrix Multiplication}
\author{Wen-Jie Liu, Zi-Xian Li
\thanks{Manuscript created February 2, 2023; This work was developed by National Natural Science Foundation of China (62071240), the Innovation Program for Quantum Science and Technology (2021ZD0302900), and the Research Innovation Program for College Graduates of Jiangsu 
Province, China (KYCX23\_1370).}
\thanks{The authors are with the School of Software, Nanjing University of Information Science and Technology, Nanjing 210044, China (e-mail: wenjiel@163.com; zixianli157@163.com).}

}

\maketitle

\begin{abstract}
Secure two-party scalar product (S2SP) is a promising research area within secure multiparty computation (SMC), which can solve a range of SMC problems, such as intrusion detection, data analysis, and geometric computations. However, existing quantum S2SP protocols are not efficient enough, and the complexity is usually close to exponential level. In this paper, a novel secure two-party quantum scalar product (S2QSP) protocol based on Fourier entangled states is proposed to achieve higher efficiency. Firstly, the definition of unconditional security under malicious models is given. And then, an honesty verification method called Entanglement Bondage is proposed, which is used in conjunction with the modular summation gate to resist malicious attacks. The property of Fourier entangled states is used to calculate the scalar product with polynomial complexity. The unconditional security of our protocol is proved, which guarantees the privacy of all parties. In addition, we design a privacy-preserving quantum matrix multiplication protocol based on S2QSP protocol. By transforming matrix multiplication into a series of scalar product processes, the product of two private matrices is calculated without revealing any privacy. Finally, we show our protocol's feasibility in IBM Qiskit simulator. 
\end{abstract}
\begin{IEEEkeywords}
 Quantum computation, Quantum communication, Secure multi-party computation, Scalar product, Matrix multiplication.
\end{IEEEkeywords}

\section{Introduction}\label{sec1}
\IEEEPARstart{S}{ecure} Multi-party Computation (SMC) enables multiple parties who do not trust each other to collaboratively compute a target function using their respective private data, while preserving the privacy of all participants. Since Yao\cite{1982YaoPro} first proposed this concept in 1982, the main solutions for SMC proposed by classical cryptographers include Garbled Circuit\cite{1986YaoHow}, Oblivious Transfer\cite{1985EvenA,1987BrassardA,2008PeikertA}, Secret Sharing\cite{1987GoldreichHow,1988Ben-OrCom,1992BeaverEff,2009XiongEff}, Homomorphic Encryption\cite{1985ElgamalA,1999PaillierPub,2008IvanInt}, etc. However, protocols with higher generality typically exhibit greater complexity. As a result, researchers often focus on developing specialized SMC protocols tailored to specific problems. Secure Two-party Scalar Product (S2SP) is a research area that investigates how two parties can securely compute the scalar product of their respective private vectors. Many SMC problems, such as intrusion detection\cite{2016HuangSec,2018ZhuEff,2019ShenShe}, data analysis\cite{2001DuPri,2002VaidyaPri,2005GoethalsOn,2011TranAn}, and geometry computation\cite{2001AtallahSec,2009ThomasSec,2012YangSec}, can be reduced to S2SP, which makes it a crucial building block for general secure computation. However, current classical S2SP schemes either exhibit high complexity or rely on the Computational Hardness Assumption for their security. Thus, there is an urgent requirement for a novel scheme that achieves both high security and computational efficiency.

In recent years, there has been a growing interest in utilizing quantum mechanisms with potential for unconditional security to achieve secure multiparty computation (SMC), which is referred to as quantum SMC (QSMC)\cite{2018LiuPri,2019LiA,2021ShiSec,2022ShiQuaA,2022ShiQuaB}. However, existing Secure Two-party Quantum Scalar Product (S2QSP) protocols are not efficient enough. In 2012, He et al.\cite{2012HeA} pioneered a S2QSP protocol, which requires a non-colluding third party to distribute entangled states among the two participants. Furthermore, it demands significant entanglement resources, which may exceed the required level when operating on sparse private input vectors. In 2018, Wang et al.\cite{2018WangQua} proposed a new S2QSP scheme using classical cryptography and continuous-variable clusters. Their scheme no longer needs a third party, but still needs massive redundant quantum resources and measurement operations. In 2019, Shi et al.\cite{2019ShiStr} proposed a strong privacy-preserving S2QSP protocol using Grover's algorithm\cite{1997GroverQua} with constant communication complexity. However, while Grover's algorithm provides a quadratic speedup, its computational complexity remains close to exponential. As a result, existing S2QSP protocols only offer unconditional security with limited improvement in computational efficiency.

Compared with S2QSP, people have more fully studied two QSMC problems, i.e., Secure Multi-party Quantum Summation or Multiplication (SMQS or SMQM), where several parties can secretly add or multiply up their private integers. In 2013, Yang et al.\cite{2013YangSec} proposed a secret sharing scheme based on Quantum Fourier Transform $\QFT$. It takes advantage of a property of $d$-level cat state\cite{2002KarimipourEnt}, i.e., the ability to keep the sum unchanged after applying the Modular Summation Gate $\SUM$. Based on this scheme, people have proposed several SMQS protocols\cite{2018YangSec,2019JiQua,2020SutradharA,2020SutradharHyb,2021SutradharA,2021YiQua}. However, these protocols are limited to SMQS. In 2016, Shi et al.\cite{2016ShiSec} provided another way. They transformed the calculation from the bit domain to the phase domain, then used the Rotation Gate $\ROT$ to perform the addition. This idea is inspired by Draper's Transform Adder\cite{2000DraperAdd}. Here, SMQM is implemented by using the Modular Multiplication Gate $\MUL$. In addition, they used Bitwise XOR Gate $\XOR$ (or $\CNOT$ gate) to construct the $d$-level cat state, and applied again when returning to untie the entanglement, and check the honesty of the other party. Considering that the states appearing in the protocol have exceeded the category of cat states, we define them as a more generalized state called Fourier Entangled (FE) State.

Although SMQS and SMQM protocols using FE states are efficient, they cannot be directly applied to scalar product computations due to the interference between addition and multiplication in such calculations. However, we discover the following insight: Shi et al.'s method\cite{2016ShiSec} allows both addition and multiplication to be performed simultaneously without interference, as addition is performed in the phase domain and multiplication is performed in the bit domain. In this paper, we propose a novel secure quantum scalar product protocol based on this nature, to achieve higher computational efficiency. To transform this idea into a sufficiently secure S2QSP protocol, we first use $\SUM$ to impose several random number in the bit domain to prevent the measurement attack. What's more, we apply a bi-particle version of $\SUM$ ($\BSUM$) to entangle all particles, so as to prevent verify the honesty of the sender. We refer to this operation as Entanglement Bondage, and design an corresponding honesty test to check the sender's honesty. In this way, the two parties will get an almost fair position, and can resist against forgery attacks without the assistance of a third party. It is worth noting that since $\QFT$ is one of the few quantum algorithms that can achieve exponential acceleration, our S2QSP scheme can calculate scalar products with the highest known efficiency, namely polynomial complexity.

Our contributions in this paper are summarized below.
\begin{enumerate}[1)]
    \item Based on the conceptions of Leakage Degree and Negligibility, the definition of unconditional security under the malicious model is given in detail. 
    \item An honesty verification method called Entanglement Bondage is proposed, which is used in conjunction with the modular summation gate to resist various malicious attacks. 
    \item Based on the property of Fourier Entangled state and the methods in 2), we propose a S2QSP protocol with polynomial complexity.  and prove its unconditional security.
    \item Based on the proposed S2QSP protocol, we present a privacy-preserving matrix multiplication protocol, as an extended application of it.
    \item Finally, we verify the feasibility of the proposed S2QSP protocol in IBM Qiskit simulator.
\end{enumerate}

The rest of this paper is arranged as follows. In Section~\ref{sec2}, we define the notations we used, give the quantum operations to be used, and introduce the Fourier entangled state. In Section~\ref{sec3}, we define unconditional security in the malicious model, propose the entanglement bondage, then present our protocol. We analyze our protocol in Section~\ref{sec4}, and give an application of it in Section~\ref{sec5}. We conclude in Section~\ref{sec6}.

\section{Preliminary}\label{sec2}
\noindent 

\subsection{Definitions of Notations}\label{sec2.1}
\noindent
Table~\ref{table_sym} shows the definitions of notations used.

\begin{table}[h]
\begin{center}
\caption{Definitions of Notations}\label{table_sym}
\begin{tabular}{|c|c|}
\hline
Symbols & Meanings\\
\hline
$\imath$   & Imaginary unit  \\
\hline
$n$   & Dimension of input vector   \\
\hline
$\mathbf{x}=(x_1,x_2,\cdots,x_n)$   &  Alice's input vector \\
\hline
$\mathbf{y}=(y_1,y_2,\cdots,y_n)$   &  Bob's input vector \\
\hline
$v$   &  Bob's input random integer\\
\hline
$u$   &  Alice's output \\
\hline
$d,D=2^d$  & Particle's qubit number and dimension \\
\hline
$a^{-1}$  & Multiplicative inverse of $a \mod D$\\
\hline
$\oplus$ & Bitwise XOR\\
\hline
$\omega$  & $e^\frac{\imath 2\pi}{D}=\cos(\frac{2\pi}{D})+\imath \sin(\frac{2\pi}{D})$ \\ 
\hline
$m,N=2^m$  & Output's bit number and modulus\\
\hline
$\intset{N}$  & Set $\set{0,1,\cdots,N-1}$ \\
\hline
$\odd{N}$  & Set composed of odd numbers in $\intset{N}$ \\
\hline
$a\lvert b$   & $a$ is a factor of $b$ \\
\hline
$a\equiv b(\mod D)$  & $D|(a-b)$\\
\hline
$\bra{\psi}_h$  & Particle $h$ is in state $\bra{\psi}$ \\
\hline
$\U_h$  & Quantum gate $\U$ performed on particle $h$\\
\hline
$a\parallel b$   & String connected by two strings $a,b$\\
\hline
$(h,t)$  & System composed of particles $h$ and $t$\\
\hline
$\floor{a},\ceil{a}$   & Round integer $a$ up and down respectively\\
\hline
$\abs{A}$ &  Cardinality of set $A$\\
\hline
$g:A\to B$ & Function $g$ from set $A$ to set $B$\\
\hline
$g^{-1}(b)$   & Solution set $\set{a\lvert g(a)=b,a\in A}$\\
\hline
$Im(g)$   & Image set $\set{ b\lvert g(a)=b,a\in A}$\\
\hline
$\Pr(A)$   & Probability of event $A$\\
\hline
$H(A,B),H(A:B),$  & Joint entropy, mutual information and conditional\\
$H(A|B)$   & entropy of random variables $A,B$ \cite{2000NielsenQua}\\
\hline
$Z_A,Z_B$   & View of Alice, Bob respectively\\
\hline
$X_A,X_B$  & Privacy of Alice, Bob respectively\\
\hline
$I_A,I_B$   & Leakage degree of Alice, Bob's privacy respectively\\
\hline
\end{tabular}
\end{center}
\end{table}

\subsection{Quantum Operations}\label{sec2.2}
\noindent
Taking two $d$-qubits particles $h=\left(h_{d-1},h_{d-2},\cdots,h_{0}\right)$ and $t=\left(t_{d-1},t_{d-2},\cdots,t_0\right)$ as example, the quantum gates used in this paper are as follows (where addition and multiplication are all performed mod $D$).
\begin{enumerate}[1)]
    \item {Quantum Fourier Transform $\QFT$ and its inverse:
    \begin{align}
        &\QFT_h:\bra{a}_h\rightarrow \sfrac{D}\sum_{j\in\intset{D} }\omega^{aj}\bra{j}_h,\\
        &\QFIT_h:\bra{a}_h\rightarrow \sfrac{D}\sum_{j\in\intset{D} }\omega^{-aj}\bra{j}_h.
    \end{align}
    }
    \item {Rotation Gate $\ROT(b)$ where $b\in\intset{D}$:
    \begin{equation}
    \ROT(b)_h:\bra{a}_h\rightarrow \omega^{ab}\bra{a}_h.
    \end{equation}
    }
    \item {Modular Summation Gate $\SUM(b)$ where $b\in\intset{D}$:
    \begin{equation}
    \SUM(b)_h:\bra{a}_h\rightarrow \bra{a+b}_h.
    \end{equation}
    Note that $-b\equiv D-b\ (\mod D)$.
    }
    \item {Modular Multiplication Gate $\MUL(b)$ where $b\in\odd{D}$:
    \begin{equation}
    \MUL(b)_h:\bra{a}_h\rightarrow \bra{ab}_h.
    \end{equation}
    Note that $b$ is odd, so it is coprime with $D=2^d$ and then has a unique multiplicative inverse $b^{-1}\mod D$.
    }
    \item {
    Bi-particle Modular Summation Gate $\BSUM$:
    \begin{equation}
\BSUM_{(h,t)}:\bra{a}_h\bra{b}_t\rightarrow \bra{a}_h\bra{b+a}_t.
    \end{equation}
    }
    \item {Bitwise XOR Gate $\XOR$:
    \begin{equation}
\XOR_{(h,t)}:\bra{a}_h\bra{b}_t\rightarrow \bra{a}_h\bra{b\oplus a}_t.
    \end{equation}}
\end{enumerate}

\subsection{Fourier Entangled State}\label{sec2.3}
\begin{definition}[\textbf{Fourier Entangled (FE) State}]\label{def1} Let $X\in S_X$ be a secret, $C\in S_C$ be a random variable independent of $X$, $\forall x\in S_X, \forall C\in S_C, \Pr(X=x)=p_x=\frac{1}{\abs{S_X}}, \Pr(C=c)=p_c=\frac{1}{\abs{S_C}}$. Given two quantum systems $P,Q$ with $l_1,l_2$ qubits respectively, and one $D=2 ^ d$-dim FE state is as:
\begin{align}
&\bra{\psi(x,c)}_{(P,Q)}=\nonumber\\
&\ \ \sfrac{D}\sum_{j\in\intset{D}}\omega^{jh(x,c)}\bra{f(j,x,c)}_P\bra{g(j,x,c)}_Q,
\end{align}
where $f:\intset{D}\times S_X\times S_C\to \intset{2^{l_1}}$, $g:\intset{D}\times S_X\times S_C\to \intset{L}$, $L=2^{l_2}$. $\forall j',j \in \intset{D}, \innerpro{f(j',x,c)}{f(j,x,c)}=\delta_{j'j}$. 
\end{definition}

We call the systems $P$ and $Q$ local systems. We generally consider attacks on $Q$, not $P$. The function $h(x,c)$ is called Phase domain, correspondingly, the functions $f(j,x,c),g(j,x,c)$ are Bit domain. The information these functions contain are called Phase-information and Bit-information respectively.

Assume an FE state $\sfrac{D}\sum_{j\in\intset{D}}\bra{j}_P\bra{ja_1}_{Q_1}\bra{ja_2}_{Q_2}$, where $a_1,a_2\in\odd{D}$, and it has the following properties: 
\begin{property}[\textbf{Addition-Multiplication Independence}]\label{prop1} If using $\ROT(b_1)$ on $Q_1$, it will be attached a phase factor $\omega^{ja_1b_1}$. If using $\ROT(b_2)$ on $Q_2$, there is a new factor $\omega^{ja_2b_2}$. Therefore, the total phase is $\omega^{j(a_1b_1+a_2b_2)}$. 
\end{property}
\begin{property}[\textbf{Addend's Disappearance}]\label{prop2} Apply $\SUM(c)$ on $Q_1$, then apply $\ROT(b)$ on $\bra{ja_1+c}_{Q_1}$. Since the global phase factor $\omega^{bc}$ does not affect any measurement results \cite{2000NielsenQua}, it can be omitted, i.e., the addend $c$ does not affect the phase.
\end{property}

\section{Proposed Protocol}\label{sec3}
\noindent 
In this section, we first define the unconditional security under the malicious model, then briefly introduce Entanglement Bondage we will use in the protocol. We provide the specific protocol process at the end.

\subsection{Unconditional Security under the Malicious Model}\label{sec2.4}

\begin{definition}[\textbf{Malicious adversary model}]\label{def2} In this model, attacks other than a) forging input, b) not participating in the protocol, and c) terminating the protocol halfway should be all defensed or detected. 
\end{definition}
In addition, the following assumptions are also made to simplifies the analysis:
\begin{enumerate}[1)]
    \item Malicious adversaries do not input, but only steal, since the information it inputs will interfere with its attack.
    \item Malicious adversaries are to obtain information, not just destroy. We mainly ensure the privacy of valid information.
    \item If either party is malicious, the other is not, since a protocol executed between malicious participants is meaningless.
\end{enumerate}

Considering the particularity of quantum protocols, we use information theory language to define unconditional security under this model. Assume that in a two-party protocol $\Pi$, there is an honest party $HP$ and a malicious party $MP$ respectively. Denote the privacy of $HP$ is a random variable $X$, with a Shannon entropy $H(X)=m_X$. Denote the expected result $MP$ should get as $F$. Then
\begin{definition}[\textbf{Leakage Degree}]\label{def3} Under an attack $AT$, the view obtained by $MP$ is a random variable $Z$. Mutual information $I=H(Z:X)$ measures the information increment to $X$ when $Z$ is known. Stipulate that if attack $AT$ is detected by $HP$, $H(Z:X)=0$, since the attack has failed. If $MP$ cannot get $F$ after its attack, then the \textbf{leakage degree} of $X$ is defined as $H(Z:X)$; Otherwise, it equals to the \textbf{conditional mutual information}\cite{1978WynerDef} $H(Z:X|F)$, which measures how much information $MP$ will obtain other than $F$.
\end{definition}
\begin{definition}[\textbf{Negligibility}]\label{def4} A function $\mu(m_X):\mathbb{N}\to [0,1]$ is said to be \textbf{negligible}, if there is no positive polynomial $poly(m_X)$ about $m_X$ so that $\mu(m_X)=\Omega(1/poly(m_X))$.
\end{definition}
\begin{definition}[\textbf{Unconditional Security under the Malicious Model}]\label{def5} Protocol $\Pi$ is said to has unconditional security under the malicious model, if in one run of $\Pi$, the leakage degree of any party's privacy is negligible under all known malicious attacks.
\end{definition}

\subsection{Entanglement Bondage}\label{sec3.1}
\noindent 
Assume that in a quantum protocol, $MP$ should send several particles $t_1,t_2,g$ to $HP$, which are FE-entangled as $\sfrac{D}\sum_{j\in\intset{D}}\bra{j}_{t_1}\bra{ja_1}_{t_2}\bra{ja_2}_{g}$, $a_1,a_2\in\odd{D}$. However, $MP$ can perform a \textbf{forgery attack}, i.e., it doesn't entangle $t_1,t_2,g$. Now $HP$ wants to verify $MP$'s honesty, without disrupting the superposition state of the particles. $HP$ applies $\MUL,\BSUM$ to add the state $j,ja_1$ to $ja_2$: 
\begin{align}ja_2\to jk_1+jk_2a_1+jk_3a_2=j(k_1+k_2a_1+k_3a_2),\end{align}
where $k_1,k_2,k_3\in\odd{D}$. This process is called \textbf{Entanglement Bondage}. If $MP$ is dishonest, then this step will entangle the three particles. Under the entanglement, $MP$ cannot steal information without being detected.

After the above steps, $HP$ can perform an honesty test as follows. $HP$ sends $k_1,k_2,k_3$ to $MP$, and $MP$ will return an answer $r=(k_1+k_2a_1+k_3a_2)^{-1}$, since the add of any three odds is also an odd. $HP$ applies $\MUL(r)$ on $g$, then $bra{j(k_1+k_2a_1+k_3a_2)}_g\to\bra{j}_g$. Now $HP$ can perform $\XOR$ on $t_1,g$, then $\bra{j}_g\to\bra{0}_g$. $HP$ can verify the correctness of the state prepared by $MP$ by measuring $g$. If $g$ is in state $\bra{0}$, then $MP$ passes; Otherwise, $MP$'s cheating is detected. This process is actually a zero-knowledge proof, through which $HP$ can verify whether $MP$ really prepared the correct quantum state as promised, without measuring the state itself. The effectiveness of this mechanism can be seen in Section~\ref{sec4.2}.

\subsection{Specific Protocol Process}\label{sec3.2}
\noindent 

\begin{definition}[\textbf{Secure Two-party Scalar Product (S2SP)}]
Alice and Bob each have an $n$-dim vector $\mathbf{x}=(x_1,x_2,\cdots,x_n)$ and $\mathbf{y}=(y_1,y_2,\cdots,y_n)$ respectively, where $x_i,y_i\in \intset{N}$, $N=2^m$. Alice is to get 
$u=\mathbf{x}\cdot\mathbf{y} +v\mod N=\sum_{i=1}^{n}{x_iy_i} +v\mod N $, while $v\in\intset{N}$ is known only by Bob. A secure protocol should meet the following requirements:

$\bullet$ Alice's Privacy: Bob learns no information about $\mathbf{x}$.

$\bullet$ Bob's Privacy: Alice learns no information about $\mathbf{y},v$ other than $u=\mathbf{x}\cdot\mathbf{y}+v\mod N$.
\end{definition}

\begin{protocol}\textbf{Secure Two-party Quantum Scalar Product Protocol (S2QSPP).}\label{pro1}
\end{protocol}

\noindent \textit{1) Preparation Stage}
\begin{enumerate}[Step 1]
\item Alice and Bob set $d= m+2$ and $D=2^d$. 
\item Alice assigns $p_i=2x_i+1$ for each $i=\cint{n}$. 
\item Bob selects randomly $v_1,v_2,\cdots,v_{n-1}\in \intset{N}$, and then sets $v_n=\frac{4v-4\sum_{i=1}^{n-1}{v_i}\mod D}{4}$, i.e., $4\sum_{i=1}^{n}{v_i}\equiv 4v (\mod D)$. He assigns $q_i=2y_i+1$ and $s_i=4v_i-2y_i-1 \mod D$.
\end{enumerate}
\ \newline
\noindent \textit{2) Operation Stage}\newline
\noindent For each $i=\cint{n}$, do the following steps (all arithmetic operations are performed mod $D$ here):
\begin{enumerate}[Step 1]
    \item {(\textit{Alice's Inputing}) Alice prepares $4$ $d$-qubit particles $h,t_1,t_2,g$ initialized as $\bra{0}$. She selects $c_1,c_2,c_4\in\intset{D}$ and $c_3\in\odd{D}$ randomly, then does the following:
            \begin{align}
    &\bra{0}_{h}\bra{0}_{t_1}\bra{0}_{t_2}\bra{0}_{g}\nonumber\\
&\onarrow{\QFT_{h}}\sfrac{D}\sum_{j\in\intset{D}}\bra{j}_{h}\bra{0}_{t_1}\bra{0}_{t_2}\bra{0}_{g} \nonumber\\
&\onarrow{\XOR_{(h,t_1)}\XOR_{(h,t_2)}\XOR_{(h,g)}}\nonumber\\
&\sfrac{D}\sum_{j\in\intset{D}}\bra{j}_{h}\bra{j}_{t_1}\bra{j}_{t_2}\bra{j}_{g}\nonumber\\
&\onarrow{\MUL(p_i)_{t_2}\MUL(c_3)_{g}}\nonumber\\
&\sfrac{D}\sum_{j\in\intset{D}}\bra{j}_{h}\bra{j}_{t_1}\bra{jp_i}_{t_2}\bra{jc_3}_{g}\nonumber\\
&\onarrow{\SUM(c_1)_{t_1}\SUM(c_2)_{t_2}\SUM(c_4)_g}\sfrac{D}\sum_{j\in\intset{D}}\bra{j}_{h}\nonumber\\
&\bra{j+c_1}_{t_1}\bra{jp_i+c_2}_{t_2}\bra{jc_3+c_4}_{g}.
            \end{align}
    Now the particles $h,t_1,t_2,g$ are in an FE state. Alice then sends $t_1,t_2,g$ to Bob.}
\item {(\textit{Entanglement Bondage}) Bob selects $k_1,k_2,k_3\in\odd{D}$ randomly, then does the following:
            \begin{align*}
&\sfrac{D}\sum_{j\in\intset{D}}\bra{j}_{h}\bra{j+c_1}_{t_1}\bra{jp_i+c_2}_{t_2}\bra{jc_3+c_4}_{g}\\
&\onarrow{\MUL(k_1)_{t_1}\MUL(k_2)_{t_2}\MUL(k_3)_g}\\
&\sfrac{D}\sum_{j\in\intset{D}}\bra{j}_{h}\bra{jk_1+c_1k_1}_{t_1}\bra{jp_ik_2+c_2k_2}_{t_2}\\
\end{align*}
            \begin{align}
&\bra{jc_3k_3+c_4k_3}_{g}\nonumber\\
&\onarrow{\BSUM_{(t_1,g)}\BSUM_{(t_2,g)}}\nonumber\\
&\sfrac{D}\sum_{j\in\intset{D}}\bra{j}_{h}\bra{jk_1+c_1k_1}_{t_1}\bra{jp_ik_2+c_2k_2}_{t_2}\nonumber\\
&\bra{j\left(k_1+p_ik_2+c_3k_3\right)+\left(c_1k_1+c_2k_2+c_4k_3\right)}_{g}\nonumber\\
&=\sfrac{D}\sum_{j\in\intset{D}}\bra{j}_{h}\bra{jk_1+c_1k_1}_{t_1}\bra{jp_ik_2+c_2k_2}_{t_2}\nonumber\\
&\bra{jr_1+r_2}_{g}\nonumber\\
&\onarrow{\MUL(k_1^{-1})_{t_1}\MUL(k_2^{-1})_{t_2}}\sfrac{D}\sum_{j\in\intset{D}}\bra{j}_{h}\bra{j+c_1}_{t_1}\nonumber\\
&\bra{jp_i+c_2}_{t_2}\bra{jr_1+r_2}_{g},
      \end{align}
where $r_1=k_1+p_ik_2+c_3k_3$, $r_2=c_1k_1+c_2k_2+c_4k_3$.}

\item {(\textit{Bob's Inputing}) Bob does the following:
       \begin{align}
&\sfrac{D}\sum_{j\in\intset{D}}\bra{j}_{h}\bra{j+c_1}_{t_1}\bra{jp_i+c_2}_{t_2}\bra{jr_1+r_2}_{g}\nonumber\\
&\onarrow{\ROT(s_i)_{t_1}\ROT(q_i)_{t_2}}\sfrac{D}\omega^{(j+c_1)s_i}\omega^{(jp_i+c_2)q_i}\nonumber\\
&\sum_{j\in\intset{D}}\bra{j}_{h}\bra{j+c_1}_{t_1}\bra{jp_i+c_2}_{t_2}\bra{jr_1+r_2}_{g}\nonumber\\
&=\sfrac{D}\sum_{j\in\intset{D}}\omega^{jM_i}\bra{j}_{h}\bra{j+c_1}_{t_1}\bra{jp_i+c_2}_{t_2}\nonumber\\
&\bra{jr_1+r_2}_{g},
     \end{align}
where $M_i=s_i+p_iq_i$. We can omit the global phase $\omega^{c_1s_i+c_2q_i}$.}

\item {(\textit{Alice's Honesty Test}) Bob now verifies Alice's honesty as follows:
    \begin{enumerate}[a)]
    \item (\textit{Question}) Bob tells Alice the values of $k_1,k_2,k_3$.
    \item {(\textit{Answer}) Alice calculates $r_3=r_1^{-1}$ and $r_4=c_1-r_2r_3$, then tells Bob the values of $r_3,r_4$.}
    \item {(\textit{Verification}) Bob does the following:
        \begin{align}
&\sfrac{D}\sum_{j\in\intset{D}}\omega^{jM_i}\bra{j}_{h}\bra{j+c_1}_{t_1}\nonumber\\
&\bra{jp_i+c_2}_{t_2}\bra{jr_1+r_2}_{g}\nonumber\\
&\onarrow{\MUL(r_3)_g}\sfrac{D}\sum_{j\in\intset{D}}\omega^{jM_i}\bra{j}_{h}\bra{j+c_1}_{t_1}\nonumber\\
&\bra{jp_i+c_2}_{t_2}\bra{j+r_2r_3}_{g}\nonumber\\
&\onarrow{\SUM(r_4)_g}\sfrac{D}\sum_{j\in\intset{D}}\omega^{jM_i}\bra{j}_{h}\bra{j+c_1}_{t_1}\nonumber\\
&\bra{jp_i+c_2}_{t_2}\bra{j+c_1}_{g}\nonumber\\
&\onarrow{\XOR_{(t_1,g)}}\sfrac{D}\sum_{j\in\intset{D}}\omega^{jM_i}\bra{j}_{h}\bra{j+c_1}_{t_1}\nonumber\\
&\bra{jp_i+c_2}_{t_2}\bra{0}_{g},
      \end{align}
and measures $g$. Alice passes only if he gets $\bra{0}_g$.}
\end{enumerate}
If the test is passed, then Bob returns $t_1,t_2$ to Alice.}

\item {(\textit{Bob's Honesty Test A}) Alice now verifies Bob's honesty as follows:
       \begin{align}
&\sfrac{D}\sum_{j\in\intset{D}}\omega^{jM_i}\bra{j}_{h}\bra{j+c_1}_{t_1}\bra{jp_i+c_2}_{t_2}\nonumber\\
&\onarrow{\SUM(-c_1)_{t_1}\SUM(-c_2)_{t_2}}\nonumber\\
&\sfrac{D}\sum_{j\in\intset{D}}\omega^{jM_i}\bra{j}_{h}\bra{j}_{t_1}\bra{jp_i}_{t_2}\nonumber\\
&\onarrow{\MUL(p_i^{-1})_{t_2}}\sfrac{D}\sum_{j\in\intset{D}}\omega^{jM_i}\bra{j}_{h}\bra{j}_{t_1}\bra{j}_{t_2}\nonumber\\
&\onarrow{\XOR_{(h,t_1)}\XOR_{(h,t_2)}}\sfrac{D}\sum_{j\in\intset{D}}\omega^{jM_i}\bra{j}_{h}\nonumber\\
&\bra{0}_{t_1}\bra{0}_{t_2},
      \end{align} 
and measures $t_1,t_2$. Bob passes the test only if Alice gets $\bra{0}_{t_1}\bra{0}_{t_2}$.
}

\item {(\textit{Alice's Result}) Alice preforms $\QFIT$ on $h$:
\begin{equation}
\sfrac{D}\sum_{j\in\intset{D}}\omega^{jM_i}\bra{j}_{h}\onarrow{\QFIT_h}\bra{M_i}_{h}\mod D,
\end{equation}
and measures $h$ to obtain the result $M_i=s_i+p_iq_i$.
}

\end{enumerate}
The quantum circuit of all the steps is shown in figure~\ref{fig_total_latex}.

\begin{figure*}[!t]
\centering
\includegraphics[width=0.85\linewidth,height = 0.3\linewidth]{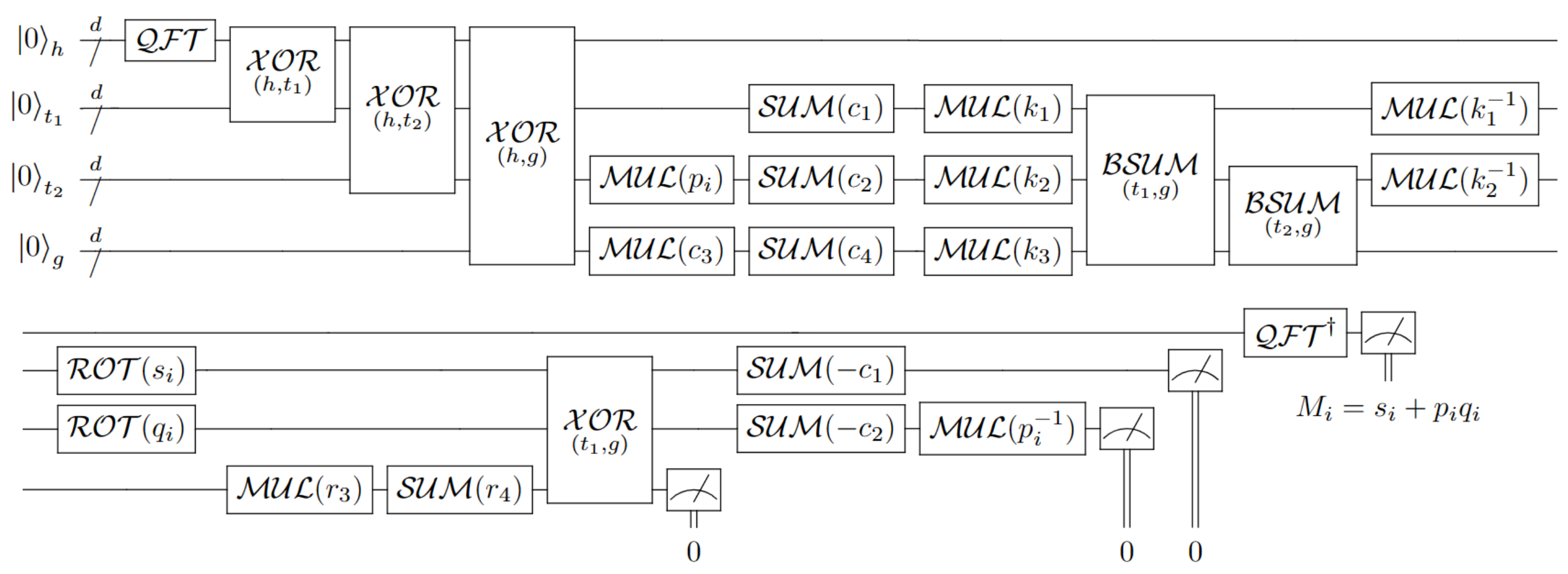}
\caption{The quantum circuit of Operation stage.}
\label{fig_total_latex}
\end{figure*}

\ 

\noindent \textit{3) Output Stage}\newline
\noindent Alice calculates as $Output=\frac{\sum^{n}_{i=1}\left(M_i-2x_i\right)\mod D}{4}$.

\section{Protocol Analysis}\label{sec4}
\subsection{Correctness}\label{sec4.1}
\begin{align*}
&Output=\frac{\sum^{n}_{i=1}\left(M_i-2x_i\right)\mod D}{4}\\
&=\frac{\sum^{n}_{i=1}\left(s_i+p_iq_i-2x_i\right)\mod D}{4}\\
&=\frac{\sum^{n}_{i=1}\left(4v_i+4x_iy_i\right)\mod D}{4}\\
\end{align*}
\begin{align}
&=\frac{4\sum^{n}_{i=1}v_i+4\sum^{n}_{i=1}x_iy_i\mod D}{4}\nonumber\\
&=\frac{4v+4\sum^{n}_{i=1}x_iy_i\mod D}{4}=\frac{4(v+\sum^{n}_{i=1}x_iy_i)-a2^d}{4}\nonumber\\
&=\frac{4(v+\sum^{n}_{i=1}x_iy_i-a2^m)}{4}=v+\sum^{n}_{i=1}x_iy_i-a2^m,
\end{align}
where $4(v+\sum^{n}_{i=1}x_iy_i)-a2^d\in \intset{D}$ and $d=m+2$, thus $v+\sum^{n}_{i=1}x_iy_i-a2^m \in \intset{\frac{D}{4}}=\intset{N}$. I.e.,
\begin{align}
&Output=v+\sum^{n}_{i=1}x_iy_i-a2^m\nonumber\\
&= v+\sum^{n}_{i=1}x_iy_i\mod D=u.
\end{align}

\subsection{Security}\label{sec4.2}
\noindent Since all the interactions occur in Operation stage, an attacker may try the following possible attacks.

\noindent\textit{1) Measurement Attack}

Measurement attack is to directly perform a local general measurement on the attacker's own particles. ``General quantum measurement'' is equivalent to a combination of introducing auxiliary systems, performing unitary operations and projective measurements\cite{2000NielsenQua}.

\noindent\textit{2) Entangle-measure Attack}

After receiving a particle $t$, the attacker can prepare an auxiliary particle $e$ and perform a unitary operation 
\begin{equation}
\U_{\varepsilon}:\bra{j}_t\bra{0}_e\rightarrow \sqrt{\eta_j}\bra{j}_t\bra{\varepsilon(j)}_e+\sqrt{1-\eta_j}\bra{V(j)}_{(t,e)},
\end{equation}
where $\eta$ is the probability that $\bra{j}_t$ is not changed. Then he sends $t$ back, and monitors its movement by measuring $e$. 

\noindent\textit{3) Forgery Attack}

Forgery attack is to send a particle in a forged rather than correct state to steal information, as a type of malicious attack.

\noindent\textit{4) Intercept-Resend Attack}

Similar to forgery attack, intercept-resend attack means to obtain information from a particle after receiving it, and then return a forged particle back.

\noindent\textit{5) False Verification Information Attack}

This attack may occur in Step 4 of Operation stage, which means to send fake verification information, such as $k_1,k_2,k_3$ and $r_3,r_4$.

\noindent\textit{6) External Attack}

External attack means that any eavesdropper Eve wants to steal Alice or Bob's privacy.

\noindent\textit{7) Semi-honest attack}

If the attacker is semi-honest, it may try to deduce information other than the expected result it should get, from all the values it can obtain during the protocol.

We has the following theorem.
\begin{theorem}\label{theorem1}
Protocol~\ref{pro1} holds unconditional security under the malicious model, i.e., it can resist the above attacks.
\end{theorem}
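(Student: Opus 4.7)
The plan is to prove Theorem~\ref{theorem1} by a case analysis over the seven attack classes enumerated above. For each attack $AT$, I would either show that the honesty tests in the Operation Stage (Step 4 for Bob verifying Alice, Step 5 for Alice verifying Bob) detect the attack with overwhelming probability, so that the leakage degree collapses to $0$ by the stipulation in Definition~\ref{def3}; or I would exhibit the reduced density matrix available to the attacker and bound the resulting mutual information $H(Z:X)$ (or the conditional $H(Z:X\mid F)$ for the semi-honest case) by $0$ or by a function that is negligible in $m_X$ per Definition~\ref{def4}.

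The first group — measurement, entangle-measure, intercept-resend on transmissions, and external eavesdropping — is handled by Property~\ref{prop2} together with the random shifts $c_1,c_2,c_4\in\intset{D}$ and odd multiplier $c_3\in\odd{D}$ that Alice injects before sending $t_1,t_2,g$ to Bob. Since $c_1,c_2,c_4$ are uniform, tracing out $h$ leaves the triple $(t_1,t_2,g)$ in the maximally mixed state on $\intset{D}^3$, making any local general measurement (and in particular the projective measurement on the auxiliary register $e$ in an entangle-measure attack $\U_{\varepsilon}$) statistically independent of $p_i=2x_i+1$. The residual danger of an entangle-measure attack, namely that $\U_{\varepsilon}$ leaves behind a state with $\eta_j<1$, is caught by the test in Step 5: the unitary sequence $\SUM(-c_1)\SUM(-c_2)\MUL(p_i^{-1})\XOR\XOR$ refocuses $t_1,t_2$ onto $\bra{0}\bra{0}$ only if the received state is the promised FE state, so any deviation is detected with probability at least $1-\min_j \eta_j$, which is non-negligible unless $\eta_j=1$ for every $j$ — a regime in which $H(Z:X)=0$ by direct calculation.

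The second group — forgery, intercept-resend by Alice, and false verification information — is disciplined by the Entanglement Bondage mechanism of Section~\ref{sec3.1}. For a malicious Alice who prepares a state other than the prescribed FE state on $(h,t_1,t_2,g)$, the verification in Step 4 applies $\MUL(r_3)\SUM(r_4)\XOR_{(t_1,g)}$ with random challenges $k_1,k_2,k_3\in\odd{D}$ and projects $g$ against $\bra{0}$. I would show, by a counting argument over the response space $(r_3,r_4)\in\odd{D}\times\intset{D}$ and the one-to-one property of $\MUL(k)$ for odd $k$, that a dishonest Alice can cause acceptance only on a small fraction of the $(D/2)^3$ challenges, giving cheating probability $O(1/D^2)$, which is negligible in $m_X=nm$. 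A malicious Bob who modifies $t_1,t_2$ before returning them is analogously caught in Step 5, since $h,t_1,t_2$ must recombine into $\bra{j}_h\bra{0}_{t_1}\bra{0}_{t_2}$ and the acceptable subspace of the $(t_1,t_2)$ registers is one-dimensional, so forged acceptance occurs with probability at most $1/D^2$. False verification messages $(k_1,k_2,k_3)$ or $(r_3,r_4)$ only relabel these challenge/response spaces and inherit the same bound.

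The main obstacle will be the rigorous Entanglement Bondage analysis: I must establish that, for uniformly random $k_1,k_2,k_3\in\odd{D}$, the affine map that sends a forged pair of bit-domain coefficients to the combined register $g$ behaves sufficiently independently of the attacker's strategy that no adaptive choice of $r_3,r_4$ succeeds with non-negligible probability. Once this bound is in hand, composition across the $n$ iterations of the Operation Stage preserves negligibility by a union bound, and the semi-honest case follows directly because $u=\mathbf{x}\cdot\mathbf{y}+v\mod N$ with $v$ uniform makes $H(\mathbf{y},v:Z_A\mid u)=0$, so that Alice learns nothing about $\mathbf{y}$ or $v$ beyond her legitimate output. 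Combining all seven cases yields Definition~\ref{def5}, completing the proof.
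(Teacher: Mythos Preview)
Your case-analysis skeleton matches the paper's, but two of the branches contain real gaps.

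\textbf{Measurement attack (Alice's privacy).} You argue that the uniform shifts $c_1,c_2,c_4$ make the reduced state on $(t_1,t_2,g)$ maximally mixed, hence independent of $p_i$. That is true for the quantum registers alone, but it ignores the classical transcript: in Step~4 Bob sees $r_3=(k_1+p_ik_2+c_3k_3)^{-1}$ and $r_4$, and $r_3$ carries $p_i$-dependent information shielded only by the odd random $c_3$. The paper's Lemma~\ref{lemma2} treats the joint system $(t_1,t_2,g,r_3,r_4)$ and shows via a Holevo-bound computation that $I_A=0$; it also notes that without the $g$ register (equivalently $c_3=c_4=0$) Bob can solve $p_i=k_2^{-1}(r_3^{-1}-k_1)$ outright. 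Your maximal-mixing argument does not cover this, so the claimed $H(Z:X)=0$ is unsupported.

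\textbf{Forgery attack (Bob's privacy).} You assert that a dishonest Alice passes Step~4 with probability $O(1/D^2)$ by a counting argument over challenges. This is not what happens: Alice sees $k_1,k_2,k_3$ \emph{before} choosing $r_3,r_4$ and can adapt. The paper's Lemma~\ref{lemma3} (case b)) shows that with the strategy $r_4=0$ Alice passes whenever $j(1-k_1r_3)\equiv 0$, i.e.\ with probability $\abs{S_J}/D$, which is not $O(1/D^2)$ and can even be $1$ when $k_1r_3\equiv 1$. The correct quantity is the \emph{leakage degree}, an average of (passing probability)$\times$(information extracted), and the paper bounds it by $\frac{d^2+d-2}{2\cdot 2^d}=O(d^2/2^d)$ via a Holevo-bound calculation on the post-selected state $\bra{\psi_{s_i}}=\abs{S_J}^{-1/2}\sum_{j\in S_J}\omega^{js_i}\bra{j}$. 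A pure soundness bound of the form you propose would not suffice, and the claimed $O(1/D^2)$ is not established.

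A smaller point: in the semi-honest case Alice's view is $(M_1,\ldots,M_n)$ with $M_i\equiv 4x_iy_i+4v_i\pmod{D}$, not merely $u$; one must argue (as in Lemma~\ref{lemma4}) that the random splitting $v_1,\ldots,v_n$ with $\sum v_i\equiv v$ makes $H(M:X_B\mid u)=0$. Your one-line appeal to uniform $v$ does not address the per-coordinate leakage.
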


\begin{proof}

\noindent\textit{1) Measurement Attack}

First, we give the following Lemma~\ref{lemma1}. It is proved in Appendix~\ref{secA.1}, relying on the use of partial trace $tr_P(\cdot)$.
\begin{lemma}[Security of Phase-information]\label{lemma1}
Under the entanglement of an FE state as Definition~\ref{def1}, any attacker who only owns local system $Q$ cannot extract the phase-information.
\end{lemma}
$\bullet$ \textbf{Alice's Privacy:} Denote $X_A=x_i$. To steal the largest information, Bob can wait for $r_3,r_4$, then measure $t_1,t_2,g$. For the leakage degree of Alice's privacy $I_A=H(Z_B:X_A)$, we have the following Lemma~\ref{lemma2}. Its proof relies on the Holevo bound\cite{1973HolevoSta}, see Appendix~\ref{secA.2} for details.
\begin{lemma}[Security of Bit-information]\label{lemma2}
Under the measurement attack, $I_A=0$, i.e., Bob cannot obtain any valid information about $X_A$. Besides, this effect cannot be achieved without particle $g$.
\end{lemma}

$\bullet$ \textbf{Bob's Privacy:} Denote $X_B=(y_i,v_i)$. Until Step 5, Bob owns a local system $(t_1,t_2,g)$ of the FE state. By Lemma~\ref{lemma1}, Alice cannot obtain any information about $q_i,s_i$, because they are all on the phase. I.e., The leakage degree of Bob's privacy $I_A=H(Z_A:X_B)=0$.\newline

\noindent\textit{2) Entangle-measure Attack}

$\bullet$ \textbf{Alice's Privacy:} Denote $X_A=x_i$. After Bob resent, it is possible to obtain information by measuring only in Step 6, since only now Alice carries on operations. Because Alice performs an honest test on the sent particles $t_1,t_2$ in Step 5, if they are no longer in their original states, the attack will be detected. Therefore, $\eta_j$ should be set to $1$, i.e., $\U_{\varepsilon}:\bra{j}_t\bra{0}_e\rightarrow \bra{j}_t\bra{\varepsilon(j)}_e$. 
    
Assume that Bob owns all particles $t_1,t_2,g,e$ to steal the largest information, and performs $\U_{\varepsilon}$. Note that the Holevo bound of Bob's particle now equals to that in measurement attack, i.e., $H(Z_B:X_A)=0$ (see the proof of Lemma~\ref{lemma2}), because $\U_{\varepsilon}$ is a local quantum operation and does not increase the Holevo bound (Chapter 12 Problem 12.1 of Ref.~\cite{2000NielsenQua}). If he returns any particle, he cannot steal more, since discarding quantum systems {is also a kind of quantum operations and does not increase the Holevo bound}. Alice's any operations in Step 6 are equivalent to local quantum measurement, which doesn't affect Bob's measurement results by the principle of implicit measurement (Chapter 4.4 of Ref.~\cite{2000NielsenQua}). Therefore, it also does not increase the Holevo bound of Bob's particles. In this way, we can immediately deduce that $I_A=0$, consistent with the measurement attack.

$\bullet$ \textbf{Bob's Privacy:} Alice cannot perform entangle-measure attacks, since she may only send particles once.\newline

\noindent\textit{3) Forgery Attack}

$\bullet$ \textbf{Alice's Privacy:} Since Alice won't return the particles sent by Bob, Bob cannot perform forgery attacks.

$\bullet$ \textbf{Bob's Privacy:} {Denote $X_B=(y_i,v_i)$.} Since Bob's information only exists on the phase domain, the only known way to obtain it is to use $\QFIT$. By Lemma~\ref{lemma1}, it is impossible to obtain phase-information under entanglement, then particles $t_1,t_2$ must be non-entangled. Accordingly, we have the following Lemma~\ref{lemma3}. We prove it by using the Holevo bound\cite{1973HolevoSta} in Appendix~\ref{secA.3}. 
\begin{lemma}[Security under Forgery Attack]\label{lemma3}
Due to the entanglement bondage in Step 2 and the test in Step 4, the leakage degree of Bob's privacy under the forgery attack is $I_B<\frac{d^2+d-2}{2\cdot2^d}=O\left( \frac{d^2}{2^d}\right)$, with an asymptotic coefficient $\frac{1}{2}$.
\end{lemma}

$I_B$ is negligible, since $2^d$ is exponential level. When $d$ reaches its minimum value, i.e., $d=1+2=3$, then $\frac{d^2+d-2}{2\cdot2^d}=\frac{5}{8}$. This bound will soon approach 0 as $d$ increases.\newline

\noindent\textit{4) Intercept-Resend Attack}

$\bullet$ \textbf{Alice's Privacy:} If Bob sends any forged particle back in Step 4, then as in the measurement attack, the leakage degree of Alice's privacy $I_A=0$.

$\bullet$ \textbf{Bob's Privacy:} Similar to forgery attacks, it is impossible for Alice to perform this attack.\newline

\noindent\textit{5) False Verification Information Attack}

$\bullet$ \textbf{Alice's Privacy:} If Bob did not send the correct $k_1,k_2,k_3$, we have $I_A=0$, just like the measurement attack.

$\bullet$ \textbf{Bob's Privacy:} Similarly, if Alice sends incorrect $r_3,r_4$, then $I_B=0$, and Bob can detect it in Step 4 c).\newline

\noindent\textit{6) External Attack}

According to the analysis above, if Eve intercepts the particles sent by Alice to Bob, obviously she will get nothing. Similarly, if she intercepts the particles returned by Bob to Alice, she cannot get Bob's privacy.\newline

\noindent\textit{7) Semi-honest Attack}

$\bullet$ \textbf{Alice's Privacy:} If Bob is semi-honest, then all he can learn are $r_3,r_4$. He won't obtain any information, as well as in the measurement attack.

$\bullet$ \textbf{Bob's Privacy:} {Denote $X_B=(\mathbf{y},v)$.} If Alice is semi-honest, all she can learn are $M_i\equiv p_i q_i+s_i\equiv 4x_i y_i+4v_i(\mod D)$. She may try to learn any information other than $u=\sum_{i=1}^n{x_i y_i}+v \mod N$. We have the following Lemma~\ref{lemma4}, and prove it in Appendix~\ref{secA.4}, by directly calculate the Shannon entropy. 
\begin{lemma}[Security under Semi-honest Attack]\label{lemma4}
Even if Alice knows all $M_i$, the leakage degree of Bob's privacy $I_B=0$.
\end{lemma}

In total, Protocol~\ref{pro1} has unconditional security under the malicious model.\end{proof}

\subsection{Performance}\label{sec4.3}
\noindent We take the basic 1-, 2- and 3-qubit quantum gates as the measurement unit of computational complexity, such as Hadamard Gate $\H:\bra{a}\rightarrow \frac{\bra{0}+(-1)^a\bra{1}}{\sqrt{2}}$, Controlled-X Gate $\CNOT:\bra{a}\bra{b}\rightarrow \bra{a}\bra{b\oplus a}$, Z-axis Rotation Gate $\PROT(i):\bra{a}\rightarrow e^{\frac{\imath 2\pi 2^i}{D}a}\bra{a}$ ($i\in\intset{d}$), it's controlled version $\CPROT(i):\bra{a}\bra{b}\rightarrow e^{\frac{\imath 2\pi 2^i}{D}ab}\bra{a}\bra{b}$, and Toffoli Gate $\Toffoli:\bra{a}\bra{b}\bra{c}\rightarrow \bra{a}\bra{b}\bra{c\oplus a\cdot b}$, etc. 

In general, the complexity of the gates in Section~\ref{sec2.2} are all below $O(d^2)$ (see Appendix~\ref{secB.1} for details). Since $d=O(m)$, the total complexity of Protocol~\ref{pro1} is $O(nm^2)$. Since there are only 4 $d$-qubit particles are sent for each $i=1,2,\cdots,n$, the communication complexity is $O(nm)$.

We use $m,n$ to represent the bit number and dimension of input vectors respectively. See Table~\ref{table_com} for the comparison between our protocol and the previous. In Ref.~\cite{2012HeA}, $O\left(4^mn\log^2 n\right)$ entanglements should have been prepared and sent. In Ref.~\cite{2018WangQua}, for real vectors $\mathbf{x}$, $\mathbf{y}$, $\innerpro{\mathbf{x}}{\mathbf{y}}$ was evaluated with accuracy $\epsilon$. Its computational and communication complexity are $O(n\epsilon^{-2})$ and $O(2\epsilon^{-2}+n^2)$ respectively. Let $\abs{\mathbf{x}}$, $\abs{\mathbf{y}}=\Theta \left(2^m\right)$, then $\epsilon=\Theta(2^{-2m})$ is needed for the error of $\mathbf{x}\cdot\mathbf{y}$ to be less than $1$. Grover's algorithm was used in Ref.~\cite{2019ShiStr}, with complexity $O\left(\sqrt{2^m}\right)$. It can be seen that our protocol is polynomial in terms of computational and communication complexity, while the previous protocols have at least one complexity close to exponential. In addition, our protocol does not require a third party. The above proves its advantages.

\begin{table}[h]
\begin{center}
\caption{Performance Comparison}\label{table_com}
\begin{tabular}{|c|c|c|c|c|}
\hline
\multirow{2}*{Protocols} & Third & Quantum & Computational & Communication\\
~ & Party & Resource & Complexity & Complexity\\
\hline
He\cite{2012HeA}   & $\checkmark$ &  Qubit & $O\left(4^mn\log^{2} n\right)$&$O\left(4^mn\log^{2} n\right)$\\
\hline
Wang\cite{2018WangQua}   & $\times$ &  Qumode & $O\left(n2^{4m}\right)$ & $O\left(2^{4m}\right)$\\
\hline
Shi\cite{2019ShiStr}   & $\times$ &  Qubit & $O\left(\sqrt{2^m}\right)$ & $O\left(m\right)$\\
\hline
Our   & $\times$ &  Qubit & $O\left(nm^2\right)$ & $O\left(nm\right)$\\
\hline
\end{tabular}
\end{center}
\end{table}

\subsection{Experiment}\label{sec4.4}
\noindent We verify the correctness and the feasibility of our protocol by circuit simulation experiments in IBM Qiskit simulator (Qiskit-0.41.0; Python-3.7; OS-Windows). Without loss of generality, let's set $m=2$ (i.e., $d=4$). The circuits of all quantum gates we used are described in figure~\ref{fig_circuit}. Because the complexity of classical simulation is sensitive to qubits' number, we use Draper's adder\cite{2000DraperAdd}, as shown in figure~\ref{fig_SUM} and \ref{fig_BSUM}. It requires no auxiliary qubits, but has higher complexity $O(d^2)$. Besides, we design a special circuit for module multiplication on $\intset{D}$ as shown in figure~\ref{fig_MUL} to further reduce qubits. See Appendix~\ref{secB.2} for details of this design. Finally, we omit the measurement of particles $t_1,t_2,g$ and only focus on the output results on particle $h$. The total circuit of Protocol~\ref{pro1} is shown in figure~\ref{fig_total}. We execute the experiment two times. Table~\ref{table_sim_1} shows the first input and output, with the selection of the intermediate parameters $v_i, c_1,c_2,c_3,c_4,k_1,k_2,k_3$. Similarly, table~\ref{table_sim_2} describe the second experiment. Each quantum program for $i=1,2,3,4$ is executed 1000 times, and figure~\ref{fig_sim} shows the results. It can be seen that our protocol can be run successfully with 100\% probability, so it is correct and feasible.

\begin{figure*}[!t]
\captionsetup[subfigure]{margin=120pt} 
\subfloat[]{\label{fig_QFT}
\includegraphics[width=0.39\linewidth,height = 0.13\linewidth]{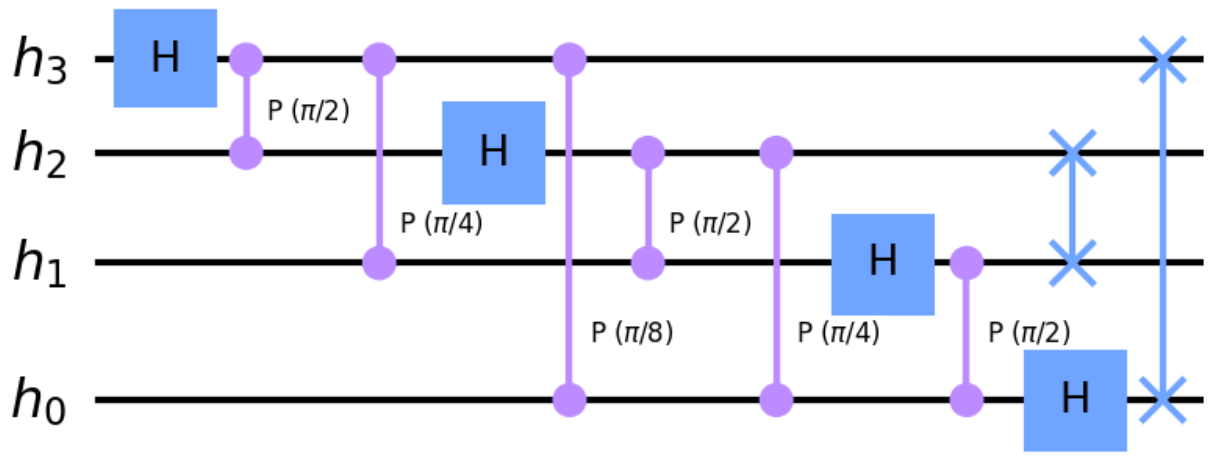}}
\hfill
\subfloat[]{\label{fig_ROT}
\includegraphics[width=0.08\linewidth,height = 0.13\linewidth]{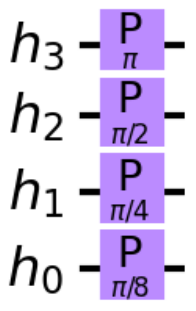}}
\hfill
\subfloat[]{\label{fig_SUM}
\includegraphics[width=0.23\linewidth,height = 0.13\linewidth]{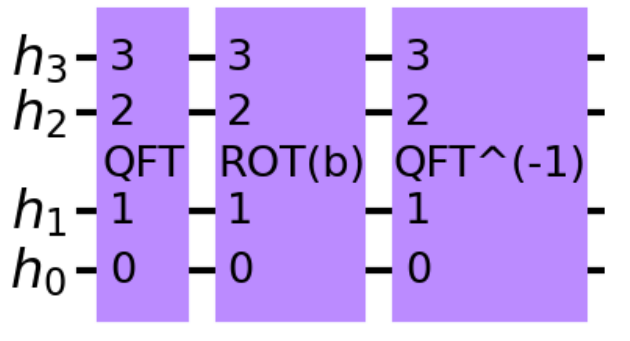}}
\hfill
\subfloat[]{\label{fig_MUL}
\includegraphics[width=0.45\linewidth,height = 0.13\linewidth]{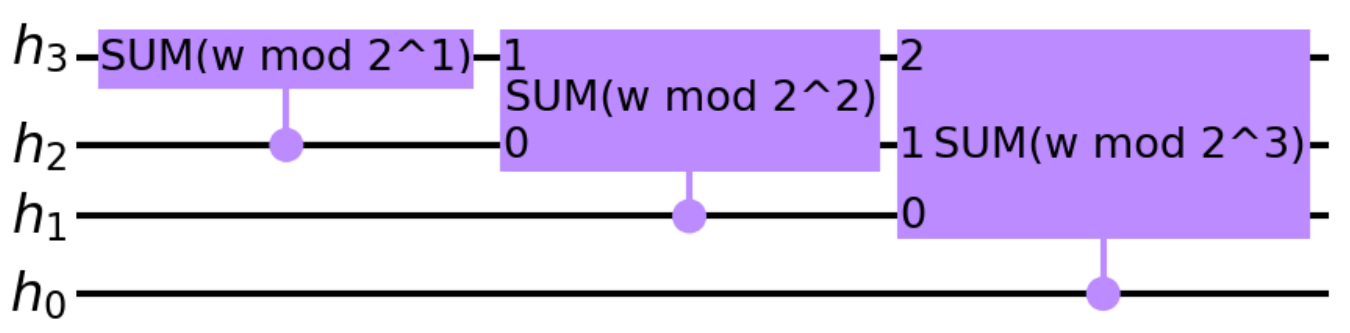}}
\hfill
\subfloat[]{\label{fig_BSUM}
\includegraphics[width=0.33\linewidth,height = 0.13\linewidth]{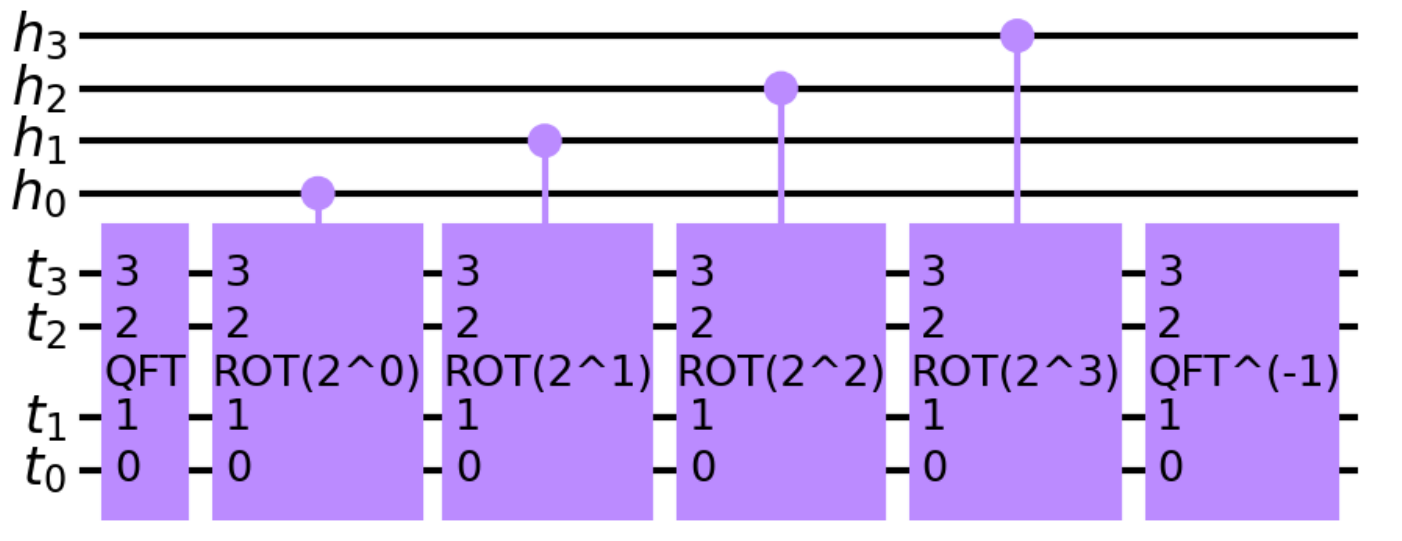}}
\hfill
\subfloat[]{\label{fig_XOR}
\includegraphics[width=0.14\linewidth,height = 0.13\linewidth]{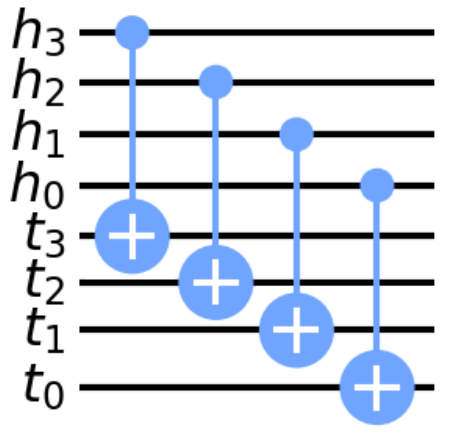}}
\caption{The circuits of quantum gates. a), b), c), d), e), f) are circuits of $\QFT$, $\ROT$, $\SUM$, $\MUL$, $\BSUM$, $\XOR$, respectively.}
\label{fig_circuit}
\end{figure*}

\begin{figure*}[!t]
\centering
\includegraphics[width=1.0\linewidth,height = 0.28\linewidth]{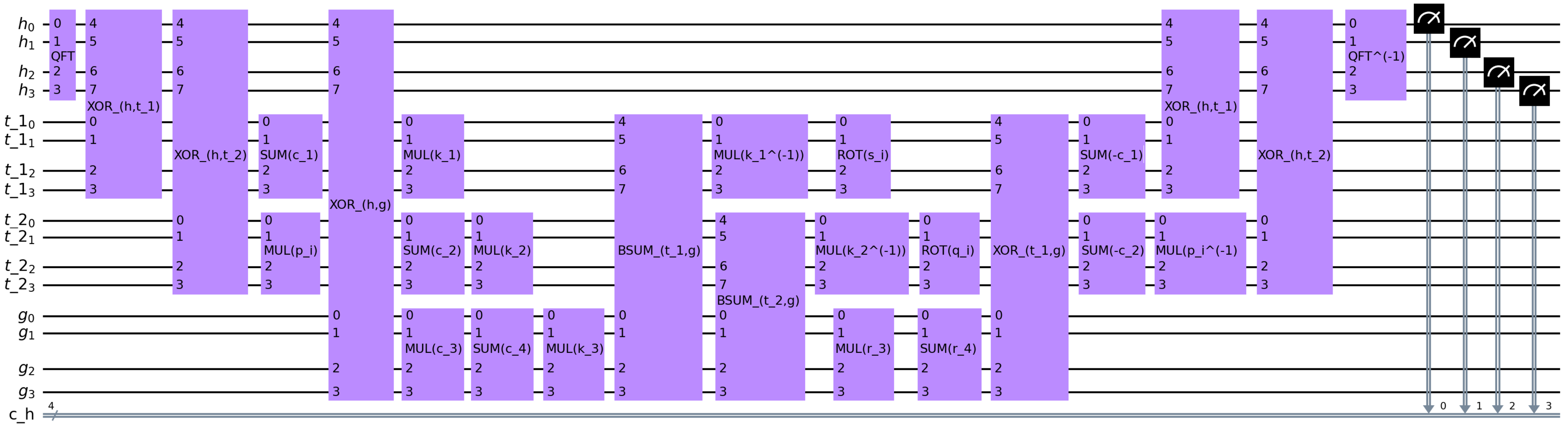}
\caption{The total circuit of our protocol.}
\label{fig_total}
\end{figure*}

\begin{table*}[h]
\begin{center}
\caption{Parameter of the First Experiment}\label{table_sim_1}
\begin{tabular}{|c|c|c|c|c|c|c|c|c|c|c|c|c|c|c|c|c|c|c|c|c|c|c|c|c|}
\hline
\multicolumn{2}{|c|}{External Parameter} & \multicolumn{4}{c|}{Input}& \multicolumn{17}{c|}{Intermediate Parameter} & \multicolumn{2}{c|}{Output}\\
\hline
$m$&$N$&$i$& $x_i$ & $y_i$ &$v$ &$d$&$D$ &$v_i$&$p_i$&$q_i$&$s_i$ &$c_1$&$c_2$&$c_3$&$c_4$&$k_1$&$k_2$&$k_3$&$r_1$ &$r_2$& $r_3$ & $r_4$& $M_i$ & $u$ \\
\hline
\multirow{4}{*}{2}&\multirow{4}{*}{4} &1 & 1 & 0 & \multirow{4}{*}{1} &\multirow{4}{*}{4}& \multirow{4}{*}{16}& 0 & 3 & 1 & 15 & 2 & 15 & 9 & 12 & 5 & 9 & 5 & 13 & 13 & 5 & 1 & 2 & \multirow{4}{*}{0}\\
\cline{3-5} \cline{9-24}
~& ~& 2 & 0 & 3 & ~& ~ & ~ & 11 & 1 & 7 & 5 & 10 & 6 & 11 & 3 & 15 & 5 & 7 & 1 & 9 & 1 & 1 & 12 & ~\\
\cline{3-5} \cline{9-24}
~& ~& 3 & 1 & 1 & ~& ~ & ~ & 15 & 3 & 3 & 9 & 10 & 11 & 3 & 2 & 1 & 1 & 1 & 7 & 7 & 7 & 9 & 2 & ~\\
\cline{3-5} \cline{9-24}
~& ~& 4 & 2 & 3 & ~& ~ & ~ & 13 & 5 & 7 & 5 & 4 & 9 & 1 & 6 & 7 & 9 & 15 & 3 & 7 & 11 & 7 & 8 & ~\\
\hline
\end{tabular}
\end{center}
\end{table*}

\begin{table*}[h]
\begin{center}
\caption{Parameter of the Second Experiment}\label{table_sim_2}
\begin{tabular}{|c|c|c|c|c|c|c|c|c|c|c|c|c|c|c|c|c|c|c|c|c|c|c|c|c|}
\hline
\multicolumn{2}{|c|}{External Parameter} & \multicolumn{4}{c|}{Input}& \multicolumn{17}{c|}{Intermediate Parameter} & \multicolumn{2}{c|}{Output}\\
\hline
$m$&$N$&$i$& $x_i$ & $y_i$ &$v$ &$d$&$D$ &$v_i$&$p_i$&$q_i$&$s_i$ &$c_1$&$c_2$&$c_3$&$c_4$&$k_1$&$k_2$&$k_3$&$r_1$ &$r_2$& $r_3$ & $r_4$& $M_i$ & $u$ \\
\hline
\multirow{4}{*}{2}&\multirow{4}{*}{4} &1 & 2 & 3 & \multirow{4}{*}{3} &\multirow{4}{*}{4}& \multirow{4}{*}{16}& 14 & 5 & 7 & 1 & 7 & 3 & 5 & 1 & 1 & 7 & 5 & 13 & 1 & 5 & 2 & 4 & \multirow{4}{*}{2}\\
\cline{3-5} \cline{9-24}
~& ~& 2 & 3 & 1 & ~& ~ & ~ & 10 & 7 & 3 & 5 & 1 & 14 & 5 & 11 & 5 & 11 & 7 & 5 & 12 & 13 & 5 & 10 & ~\\
\cline{3-5} \cline{9-24}
~& ~& 3 & 1 & 2 & ~& ~ & ~ & 15 & 3 & 5 & 7 & 14 & 12 & 1 & 9 & 1 & 3 & 3 & 13 & 13 & 5 & 13 & 6 & ~\\
\cline{3-5} \cline{9-24}
~& ~& 4 & 0 & 1 & ~& ~ & ~ & 0 & 1 & 3 & 13 & 4 & 1 & 1 & 1 & 13 & 3 & 15 & 15 & 6 & 15 & 10 & 0 & ~\\
\hline
\end{tabular}
\end{center}
\end{table*}

\begin{figure*}[!t]
\captionsetup[subfigure]{margin=120pt} 
\subfloat[]{\label{fig_sim_1_1}
\includegraphics[width=0.11\linewidth,height = 0.14\linewidth]{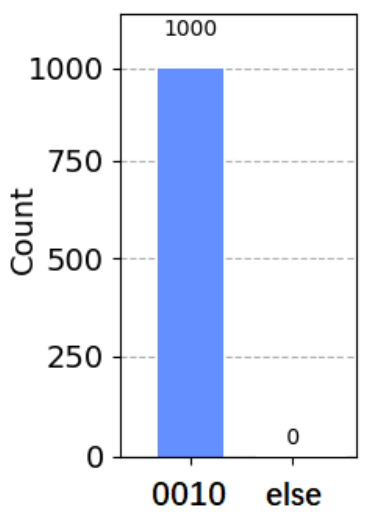}}
\hfill
\subfloat[]{\label{fig_sim_1_2}
\includegraphics[width=0.11\linewidth,height = 0.14\linewidth]{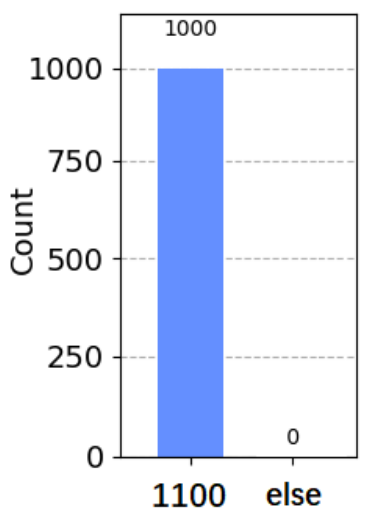}}
\hfill
\subfloat[]{\label{fig_sim_1_3}
\includegraphics[width=0.11\linewidth,height = 0.14\linewidth]{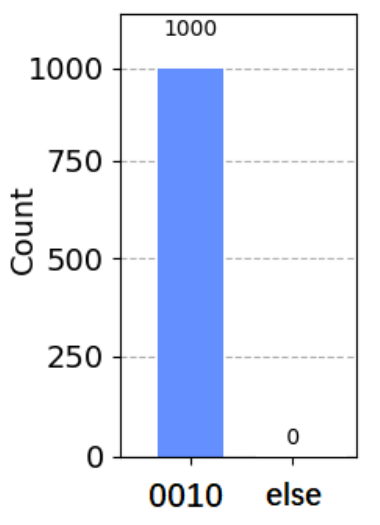}}
\hfill
\subfloat[]{\label{fig_sim_1_4}
\includegraphics[width=0.11\linewidth,height = 0.14\linewidth]{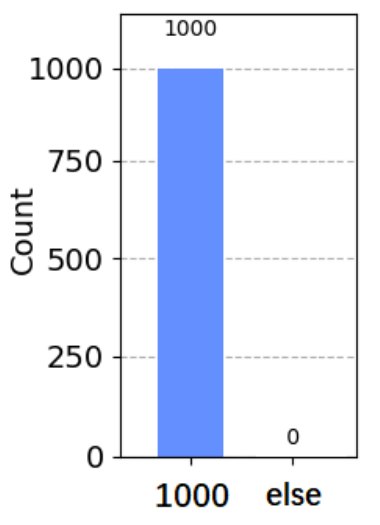}}
\hfill
\subfloat[]{\label{fig_sim_2_1}
\includegraphics[width=0.11\linewidth,height = 0.14\linewidth]{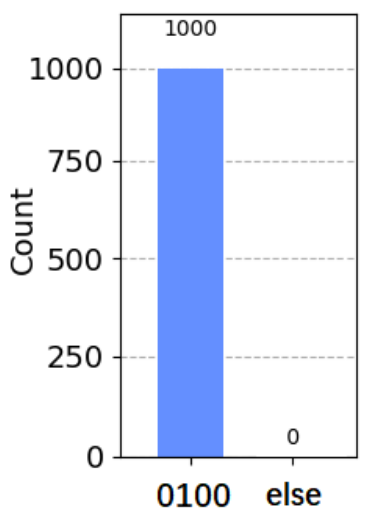}}
\hfill
\subfloat[]{\label{fig_sim_2_2}
\includegraphics[width=0.11\linewidth,height = 0.14\linewidth]{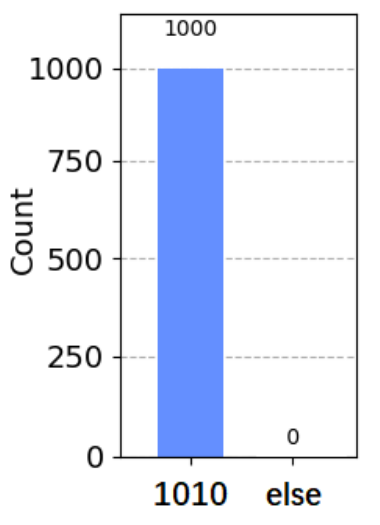}}
\hfill
\subfloat[]{\label{fig_sim_2_3}
\includegraphics[width=0.11\linewidth,height = 0.14\linewidth]{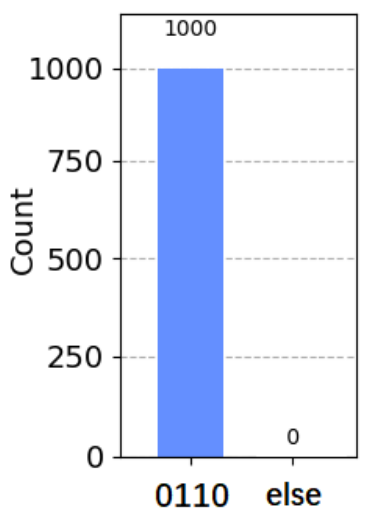}}
\hfill
\subfloat[]{\label{fig_sim_2_4}
\includegraphics[width=0.11\linewidth,height = 0.14\linewidth]{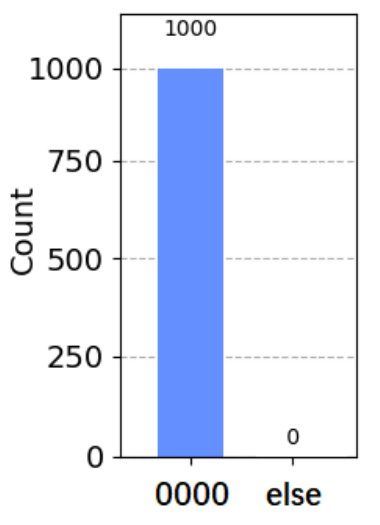}}
\caption{a), b), c), d) are the results $M_i$ for $i=1,2,3,4$ of the first experiment, respectively. Similarly, e), f), g), h) are the results of the second experiment.}
\label{fig_sim}
\end{figure*}

\section{Application}\label{sec5}
\noindent In this section, we present an application of Protocol~\ref{pro1}, i.e., a Privacy-preserving Two-party Matrix Multiplication (P2MM) protocol. This problem has been extensively studied in classical SMC. The complex matrix computation is mainly realized by generating product triples, among which cryptographic techniques such as oblivious transfer\cite{2016KellerMas,2018CramerSpd}, homomorphic encryption\cite{2018KellerOve,2019RatheeImp,2020ChenMal} and so on are widely used. We point out that since we have implemented a highly efficient two-party scalar product protocol, we no longer need to perform these computationally expensive processes. To our knowledge, this is the first quantum solution to solve this problem.

\subsection{Proposed P2MM Protocol}\label{sec5.1}

Firstly, we provide a precise definition of the problem.
\begin{definition}[\textbf{Privacy-preserving Two-party Matrix Multiplication (P2MM)}]
Alice and Bob have two $k\times n$ matrix 
\begin{equation}
\begin{aligned}
    \mathbf{A}=\left(a_{ij}\right)_{k\times n}=\begin{pmatrix} a_{11}& a_{12} & \cdots & a_{1n}\\a_{21} & a_{22} & \cdots & \vdots\\ \vdots & \vdots & \ddots & \vdots\\ a_{k1} & a_{k2} & \cdots & a_{kn}\end{pmatrix}
    \end{aligned}
\end{equation}
and $\mathbf{B}=\left(b_{ij}\right)_{k\times n}$, respectively, where $a_{ij},b_{ij}\in \intset{N}$, $N=2^m$. Alice is to get $\mathbf{U}=\mathbf{A}\cdot\mathbf{B} +\mathbf{V}$, while $\mathbf{V}=\left(v_{ij}\right)_{k\times n}, v_{ij}\in\intset{N}$ is a random matrix known only by Bob. Neither Alice nor Bob can get more information. Here we  omit ``$\mod N$''.
\end{definition}

The main scheme for calculating the matrix product of two participants is as follows: By using the formula of matrix multiplication, the calculation of $k\times n$ matrices is transformed into $kn$ times vector scalar product process, which are solved using our proposed S2QSP protocol.

\begin{protocol}\textbf{Privacy-preserving Two-party Quantum Matrix Multiplication Protocol (P2QMMP).}\label{pro2}
\end{protocol}

\noindent For each $1\le i\le k,1\le j\le n$, Alice and Bob do the following steps:
\begin{enumerate}[Step 1]
    \item {Alice separately extracts the $i$-th $n$-dimensional row vector of matrix $\mathbf{A}$, i.e.,
    \begin{align}\mathbf{x}_{i}=\left(a_{i1},a_{i2},\cdots,a_{in}\right),\end{align} 
    as her input vector.}
    \item {Similarly, Bob separately extracts the $j$-th $n$-dimensional column vector of matrix $\mathbf{B}$, i.e.,
    \begin{align}\mathbf{y}_{j}=\left(b_{1j},b_{2j},\cdots,b_{nj}\right).\end{align}
    Then he takes out the element $v_{ij}$ in the $i$-th row and $j$-column of matrix $\mathbf{V}$. The vector $\mathbf{y}_j$ and integer $v_{ij}$ are his inputs.}
    \item {Now the two vectors $\mathbf{x}_{i}$, $\mathbf{y}_{j}$, and the random integer $v_{ij}$ are valid inputs of Protocol~\ref{pro1}. Alice and Bob execute Protocol~\ref{pro1}, where parameter $N$, $m$ and $n$ are all set to the same as here.}
    \item {After all the steps of Protocol~\ref{pro1} are completed, Alice can obtain a corresponding result
    \begin{align}Output_{ij}=\mathbf{x}_{i}\cdot \mathbf{y}_{j}+v_{ij}.\end{align}
    }
\end{enumerate}
\noindent After executing $m\times n$ times the above steps, Alice now has $m\times n$ integers $Output_{ij}$, for $1\le i\le k,1\le j\le n$. She now assembles the result matrix by these integers as
\begin{align}Output=(Output_{ij})_{k\times n}.\end{align}

\subsection{Protocol Analysis}\label{sec5.2}

\noindent \textit{1) Correctness}
    \begin{align*}
Output=\begin{pmatrix}\mathbf{x}_{1}\cdot \mathbf{y}_{1}+v_{11} & \cdots & \mathbf{x}_{1}\cdot \mathbf{y}_{n}+v_{1n}\\\mathbf{x}_{2}\cdot \mathbf{y}_{1}+v_{21} & \cdots & \vdots\\ \vdots &  \ddots & \vdots\\ \mathbf{x}_{k}\cdot \mathbf{y}_{1}+v_{k1}  & \cdots & \mathbf{x}_{k}\cdot \mathbf{y}_{n}+v_{kn}\end{pmatrix}
\end{align*}
\begin{align*}
    &=\begin{pmatrix} \mathbf{x}_{1}\cdot \mathbf{y}_{1}& \mathbf{x}_{1}\cdot \mathbf{y}_{2} & \cdots & \mathbf{x}_{1}\cdot \mathbf{y}_{n}\\\mathbf{x}_{2}\cdot \mathbf{y}_{1} & \mathbf{x}_{2}\cdot \mathbf{y}_{2} & \cdots & \vdots\\ \vdots & \vdots & \ddots & \vdots\\ \mathbf{x}_{k}\cdot \mathbf{y}_{1} & \mathbf{x}_{k}\cdot \mathbf{y}_{2} & \cdots & \mathbf{x}_{k}\cdot \mathbf{y}_{n}\end{pmatrix}+\mathbf{V}
    \end{align*}
\begin{align*}
    &=\begin{pmatrix} \sum_{i=1}^{n}{a_{1i}b_{i1}}& \sum_{i=1}^{n}{a_{1i}b_{i2}} & \cdots & \sum_{i=1}^{n}{a_{1i}b_{in}}\\\sum_{i=1}^{n}{a_{2i}b_{i1}} & \sum_{i=1}^{n}{a_{2i}b_{i2}} & \cdots & \vdots\\ \vdots & \vdots & \ddots & \vdots\\ \sum_{i=1}^{n}{a_{ki}b_{i1}} & \sum_{i=1}^{n}{a_{ki}b_{i2}} & \cdots & \sum_{i=1}^{n}{a_{ki}b_{in}}\end{pmatrix}+\mathbf{V}\\
    \end{align*}
\begin{align}
    &=\begin{pmatrix} a_{11}& a_{12} & \cdots & a_{1n}\\a_{21} & a_{22} & \cdots & \vdots\\ \vdots & \vdots & \ddots & \vdots\\ a_{k1} & a_{k2} & \cdots & a_{kn}\end{pmatrix}\cdot\begin{pmatrix} b_{11}& b_{12} & \cdots & b_{1n}\\b_{21} & b_{22} & \cdots & \vdots\\ \vdots & \vdots & \ddots & \vdots\\ b_{k1} & b_{k2} & \cdots & b_{kn}\end{pmatrix}\nonumber\\
    &+\mathbf{V}=\mathbf{A}\cdot \mathbf{B}+\mathbf{V}.
    \end{align}

\noindent \textit{2) Security}
\begin{theorem}\label{theorem8}
Protocol~\ref{pro2} has unconditional security under the malicious model.
\end{theorem}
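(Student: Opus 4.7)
The plan is to reduce the security of Protocol~\ref{pro2} directly to the already-established security of Protocol~\ref{pro1} (Theorem~\ref{theorem1}), viewing the $kn$ invocations as an independent sequential composition with fresh randomness for each $(i,j)$. Three things must be verified: per-invocation security (handed to us by Theorem~\ref{theorem1}), statistical independence across invocations, and preservation of negligibility after aggregating $kn$ terms.

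First I would analyze Alice's privacy. In the $(i,j)$-th call to Protocol~\ref{pro1}, Alice's input is the row $\mathbf{x}_{i}$; by Lemmas~\ref{lemma2} and \ref{lemma3} the per-call leakage to a malicious Bob is either $0$ or bounded by $O(d^{2}/2^{d})$. Because each invocation uses disjoint particles and freshly sampled masks $c_1,c_2,c_4\in\intset{D}$, $c_3\in\odd{D}$ on Alice's side and $k_1,k_2,k_3\in\odd{D}$ on Bob's side, the joint quantum state across invocations factorises as a tensor product, so Bob's overall view decomposes over calls. By the chain rule for mutual information and the independence of the per-call masks, the aggregate leakage about any particular row $\mathbf{x}_{i}$ is at most the sum of per-call leakages in which it appears, which over $n$ calls remains $n\cdot O(d^{2}/2^{d})$; over the full matrix the total leakage is a polynomial multiple of a negligible function, hence still negligible by Definition~\ref{def4}.

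For Bob's privacy, each honest execution reveals to Alice only $Output_{ij}=\mathbf{x}_{i}\cdot\mathbf{y}_{j}+v_{ij}\mod N$ (by Lemma~\ref{lemma1} for the phase-encoded information and Lemma~\ref{lemma4} for the semi-honest case). Stacking these across $(i,j)$ gives exactly $\mathbf{U}=\mathbf{A}\mathbf{B}+\mathbf{V}$, i.e., the legitimate output. Since Bob samples the $v_{ij}$ uniformly and independently across $(i,j)$, $\mathbf{V}$ is a uniform random matrix independent of $(\mathbf{A},\mathbf{B})$, so no linear combination of entries of $\mathbf{U}$ reveals information about an individual $\mathbf{y}_{j}$ or $v_{ij}$ beyond what the functionality $\mathbf{U}$ already discloses. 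Attempts by a malicious Alice to deviate in a specific $(i,j)$ invocation (forgery, entangle-measure, or false verification information) are caught locally by Bob's Entanglement Bondage test and Honesty Test~A embedded in that invocation of Protocol~\ref{pro1}; applying Theorem~\ref{theorem1} to each call and summing keeps the aggregate extra gain negligible.

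The main obstacle, and the point most deserving of care, is the composition argument: I must argue that no cross-invocation correlation lets either party exceed the per-invocation Holevo bound. The two technical points to nail down are (i) statistical independence of the $Output_{ij}$ from Alice's side, which follows from the $kn$ independent uniform masks $v_{ij}$, and (ii) tensor-product structure of the quantum systems of different invocations, which follows from the freshly sampled $c$- and $k$-randomness and from the fact that each invocation measures its own ancillary particles $t_1,t_2,g$ to completion. Once these independence claims are in place, closure of negligibility under polynomial factors (Definition~\ref{def4}) immediately yields the unconditional security of Protocol~\ref{pro2} under the malicious model.
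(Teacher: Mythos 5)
Your proposal is correct and takes essentially the same route as the paper: reduce the security of Protocol~\ref{pro2} to that of Protocol~\ref{pro1} (Theorem~\ref{theorem1}) invocation by invocation, and observe that each $Output_{ij}$ is a legitimate part of $\mathbf{U}=\mathbf{A}\mathbf{B}+\mathbf{V}$ so it discloses nothing beyond the intended functionality. You are in fact more careful than the paper about the composition step (tensor-product structure from fresh per-call randomness, summing $kn$ negligible terms and invoking closure of negligibility under polynomial factors), which the paper handles only informally; the one slip is that the nonzero $O(d^{2}/2^{d})$ bound of Lemma~\ref{lemma3} concerns Bob's privacy under Alice's forgery, not Alice's privacy against a malicious Bob (for the latter all per-call leakages are exactly $0$), but this relabeling does not affect the conclusion.
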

\begin{proof}
For each $1\le i\le k$, $1\le j \le n$, Alice and Bob execute one time Protocol~\ref{pro1}, and then Alice gets $Output_{ij}=\mathbf{x}_i\cdot \mathbf{y}_j+v_{ij}$, Bob only gets $v_{ij}$. Since Protocol~\ref{pro1} is secure enough with high probability (see Section~\ref{sec4.2}), the result Alice can only obtain is $Output_{ij}$. Since $Output_{ij}$ is part of the result matrix $\mathbf{U}$, i.e. it can be regressed from $\mathbf{U}$, it does not provide any information beyond what Alice deserves. Similarly, Bob is unable to receive any improper benefits. Therefore, Protocol~\ref{pro2} is at least as secure as Protocol~\ref{pro1}.
\end{proof}\newline

\noindent \textit{3) Performance}

\noindent Obviously, the computational and communication complexity of Protocol~\ref{pro2} are $O(kn\cdot nm^2)=O(kn^2m^2)$ and $O(kn\cdot nm)=O(kmn^2)$, since Protocol~\ref{pro1} is executed $kn$ times.

\section{Conclusion}\label{sec6}
\noindent In this paper, we propose a secure and efficient two-party quantum scalar product protocol, where several special properties of Fourier entangled states are used for calculation and security. Our protocol does not require any third parties, and has unconditional security under the malicious adversary model. It has polynomial level computational and communication complexity, which is the most efficient than the state-of-the-art protocols. Furthermore, based on the proposed S2QSP protocol, we present
a privacy-preserving matrix multiplication protocol as its extended application.

However, because our protocol involves high-dimensional entangled states, it will be relatively fragile under noise. The transmission error may be reduced by high-dimensional error correction code. Besides, there is a future research direction on how to extend the protocol to multi-party scenario, which can achieve a wider application.

\section*{Declarations}

\begin{itemize}
\item \textbf{Conflict of interest} The authors declare that they have no conflict of interest.
\item \textbf{Ethical statement} Articles do not rely on clinical trials. 
\item \textbf{Data availability} Data sharing does not applicable to this article as no datasets were generated or analysed during the current study.
\end{itemize}

\appendices
\section{Proof of Lemmas}\label{secA}

\subsection{Proof of Lemma~\ref{lemma1}}\label{secA.1}
\begin{proof}[]Let's assume that there is an attacker who dose not know the values of $X,C$. He may perform any type of measurement on local system $Q$ to obtain a measure result $Z$. First, we calculate the global density operator as
\begin{align}
&\rho^{(P,Q)}_{xc}=\outerpro{\psi(x,c)}{\psi(x,c)}_{(P,Q)}\nonumber\\
&=\frac{1}{D}\sum_{j',j\in \intset{D}}\omega^{(j'-j)h(x,c)}\outerpro{f(j',x,c)}{f(j,x,c)}_A\nonumber\\
&\outerpro{g(j',x,c)}{g(j,x,c)}_Q.
\end{align}
Remember $\innerpro{f(j',x,c)}{f(j,x,c)}=\delta_{j'j}$. Then
\begin{align}
&\rho^Q_{xc}=tr_P\left(\rho^{(P,Q)}_{xc}\right)\nonumber\\
&=\frac{1}{D}\sum_{j',j\in \intset{D}}\omega^{(j'-j)h(x,c)}tr\left(\outerpro{f(j',x,c)}{f(j,x,c)}_P\right)\nonumber\\
&\outerpro{g(j',x,c)}{g(j,x,c)}_Q\nonumber\\
&=\frac{1}{D}\sum_{j',j\in \intset{D}}\omega^{(j'-j)h(x,c)}\delta_{j'j}\outerpro{g(j',x,c)}{g(j,x,c)}_Q\nonumber\\
&=\frac{1}{D}\sum_{j\in \intset{D}}\outerpro{g(j,x,c)}{g(j,x,c)}_Q.
\end{align}
In this formula functions $h(x,c),f(j,x,c)$ disappear. Thus the attacker cannot obtain any phase-information.\end{proof}
\subsection{Proof of Lemma~\ref{lemma2}}\label{secA.2}
\noindent We first deduce a general upper bound of information disclosure. Follow Definition~\ref{def1} and Lemma~\ref{lemma1}, then we have
\begin{proposition}[\textbf{Upper Bound of Information Disclosure}]\label{proposition1}
Under the attack described in Section~\ref{secA.1},
\begin{align}\label{eq_h}
H(Z&:X)\le \log_2{\abs{S_X}}-\nonumber\\
&\frac{\sum_{b\in\intset{L}}\beta_b\log_2{\beta_b}-\abs{S_X}\sum_{b\in\intset{L}}\alpha_{bx}\log_2{\alpha_{bx}}}{D\abs{S_X}\abs{S_C}},
\end{align} 
where $\alpha_{bx}=\abs{g_x^{-1}(b)}\nonumber=\abs{\set{(j,c)\lvert g(j,x,c)=b}}$ and $\beta_b=\abs{g^{-1}(b)}\nonumber=\abs{\set{(j,x,c)\lvert g(j,x,c)=b}}$.
\end{proposition}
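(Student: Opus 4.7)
The plan is to invoke the Holevo bound on the classical-quantum ensemble $\{p_x,\rho_x\}_{x\in S_X}$ that the attacker effectively holds, where $\rho_x$ is the state on $Q$ after averaging out the unknown nonce $C$. Since the Holevo bound $H(Z:X)\le S(\rho)-\sum_x p_x S(\rho_x)$ upper-bounds the accessible information for \emph{any} POVM $Z$, once $S(\rho)$ and the average $\sum_x p_x S(\rho_x)$ are evaluated the proposition follows by classical algebra alone.

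First I would apply Lemma~\ref{lemma1} to write $\rho^Q_{xc}=\frac{1}{D}\sum_{j\in\intset{D}}\outerpro{g(j,x,c)}{g(j,x,c)}_Q$, which is already diagonal in the computational basis of $Q$. Averaging uniformly over $c\in S_C$ collapses this to $\rho_x=\frac{1}{D\abs{S_C}}\sum_{b\in\intset{L}}\alpha_{bx}\outerpro{b}{b}$, and a further uniform average over $x$ yields $\rho=\frac{1}{D\abs{S_X}\abs{S_C}}\sum_{b\in\intset{L}}\beta_b\outerpro{b}{b}$ via the identity $\sum_x\alpha_{bx}=\beta_b$. The key observation is that every operator in sight is diagonal in the same basis, so its von Neumann entropy equals the Shannon entropy of its diagonal, reducing the remaining problem to classical entropy arithmetic.

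Next I would expand each Shannon entropy using the normalizations $\sum_b\alpha_{bx}=D\abs{S_C}$ and $\sum_b\beta_b=D\abs{S_X}\abs{S_C}$. Each entropy splits into a constant $\log_2$-piece from the normalization plus a $-\frac{1}{\text{norm}}\sum_b(\cdot)\log_2(\cdot)$ piece, and substituting into the Holevo inequality the constant pieces combine via $\log_2(D\abs{S_X}\abs{S_C})-\log_2(D\abs{S_C})=\log_2\abs{S_X}$ to yield the leading term. The residual $\beta_b\log_2\beta_b$ and $\alpha_{bx}\log_2\alpha_{bx}$ contributions, after collecting common denominators, assemble into the stated fractional correction.

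The main obstacle will be the bookkeeping: three independent normalization factors $D$, $\abs{S_X}$, $\abs{S_C}$ propagate through the logarithm expansions and must cancel in just the right way to reproduce the exact denominator $D\abs{S_X}\abs{S_C}$. A small subtlety to flag is that the written correction should be read with an implicit sum or uniform average over $x$, so that the $\abs{S_X}$ factor in the numerator is consistent with the $\frac{1}{\abs{S_X}}\sum_x S(\rho_x)$ term arising in the Holevo formula. Beyond invoking the Holevo bound and Lemma~\ref{lemma1}, no further quantum-mechanical machinery is required; everything else is purely Shannon-entropy algebra.
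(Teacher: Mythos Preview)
Your proposal is correct and follows essentially the same route as the paper's own proof: invoke Lemma~\ref{lemma1} to obtain the diagonal reduced state $\rho^Q_{xc}$, average over $c$ and then $x$ to get diagonal $\rho_x$ and $\rho$, compute their von Neumann entropies as Shannon entropies of the diagonals using the normalizations $\sum_b\alpha_{bx}=D\abs{S_C}$ and $\sum_b\beta_b=D\abs{S_X}\abs{S_C}$, and substitute into the Holevo bound. Your flagged subtlety about the implicit average over $x$ in the $\alpha_{bx}\log_2\alpha_{bx}$ term is apt; the paper handles this by tacitly pulling that sum outside as if it were $x$-independent, which is exactly the reading you suggest.
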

\begin{proof}
As is deduced in Section~\ref{secA.1},
\begin{equation}
\rho^Q_{xc}=\frac{1}{D}\sum_{j\in \intset{D}}\outerpro{g(j,x,c)}{g(j,x,c)}_Q.
\end{equation}
Remember $p_x=\frac{1}{\abs{S_X}}$, $p_c=\frac{1}{\abs{S_C}}$. Then
\begin{align}
&\rho^Q_{x}=\sum_{c\in S_C}\frac{1}{\abs{S_C}}\rho^Q_{xc}\nonumber\\
&=\frac{1}{\abs{S_C}}\sum_{c\in S_C}\frac{1}{D}\sum_{j\in \intset{D}}\outerpro{g(j,x,c)}{g(j,x,c)}_Q\nonumber\\
&=\frac{1}{D\abs{S_C}}\sum_{b\in\intset{L}}\sum_{g(j,x,c)=b}\outerpro{b}{b}_Q\nonumber\\
&=\frac{1}{D\abs{S_C}}\sum_{b\in \intset{L}}\alpha_{bx}\outerpro{b}{b}_Q.
\end{align}
Remember that $g:\intset{D}\times S_X\times S_C\to \intset{L}$. Then we can get its Von Neumann entropy
\begin{align}
&S\left(\rho^Q_{x}\right)=-\sum_{b\in\intset{L}}\frac{\alpha_{bx}}{D\abs{S_C}}\log_2\frac{\alpha_{bx}}{D\abs{S_C}}\nonumber\\
&=\sum_{b\in\intset{L}}\frac{\alpha_{bx}}{D\abs{S_C}}\left(\log_2\left(D\abs{S_C}\right)-\log_2\alpha_{bx}\right)\nonumber\\
&=\frac{\sum_{b\in\intset{L}}\alpha_{bx}}{D\abs{S_C}}\log_2\left(D\abs{S_C}\right)-\frac{1}{D\abs{S_C}}\sum_{b\in\intset{L}}\alpha_{bx}\log_2\alpha_{bx}\nonumber\\
&=\log_2\left(D\abs{S_C}\right)-\frac{1}{D\abs{S_C}}\sum_{b\in\intset{L}}\alpha_{bx}\log_2\alpha_{bx},
\end{align}
where $\sum_{b\in\intset{L}}\alpha_{bx}=\abs{\intset{D}\times S_C}=D\abs{S_C}$. Now
\begin{align}
&\rho^Q=\sum_{x\in S_X}\frac{1}{\abs{S_X}}\rho^Q_{x}\nonumber\\
&=\frac{1}{\abs{S_X}}\sum_{x\in S_X}\frac{1}{\abs{S_C}}\sum_{c\in S_C}\frac{1}{D}\sum_{j\in \intset{D}}\outerpro{g(j,x,c)}{g(j,x,c)}_Q\nonumber\\
&=\frac{1}{D\abs{S_X}\abs{S_C}}\sum_{b\in \intset{L}}\beta_{b}\outerpro{b}{b}_Q.
\end{align}
Similar, its entropy is
\begin{align}
&S\left(\rho^B\right)=-\sum_{b\in\intset{L}}\frac{\beta_{b}}{D\abs{S_X}\abs{S_C}}\log_2\frac{\beta_{b}}{D\abs{S_X}\abs{S_C}}\nonumber\\
&=\log_2\left(D\abs{S_X}\abs{S_C}\right)-\frac{1}{D\abs{S_X}\abs{S_C}}\sum_{b\in\intset{L}}\beta_{b}\log_2\alpha_{b},
\end{align}
where $\sum_{b\in\intset{L}}\beta_{b}=\abs{\intset{D}\times S_X\times S_C}=D\abs{S_X}\abs{S_C}$. Now
\begin{align}
&H(Z:X)\le S\left(\rho^B\right)-\sum_{x\in S_X}p_x S\left(\rho^B_x\right)\nonumber\\
&=\log_2\left(D\abs{S_X}\abs{S_C}\right)-\frac{1}{D\abs{S_X}\abs{S_C}}\sum_{b\in\intset{L}}\beta_{b}\log_2\alpha_{b}-\nonumber\\
&\ \abs{S_X}\frac{1}{\abs{S_X}}\left(\log_2\left(D\abs{S_C}\right)-\frac{1}{D\abs{S_C}}\sum_{b\in\intset{L}}\alpha_{bx}\log_2\alpha_{bx}\right)\nonumber\\
&=\log_2{\abs{S_X}}-\nonumber\\
&\ \ \ \frac{\sum_{b\in\intset{L}}\beta_b\log_2{\beta_b}-\abs{S_X}\sum_{b\in\intset{L}}\alpha_{bx}\log_2{\alpha_{bx}}}{D\abs{S_X}\abs{S_C}}.
\end{align} 
By the Holevo bound\cite{1973HolevoSta}.
\end{proof}

Now we can prove Lemma~\ref{lemma2}.

\begin{proof}[Proof of Lemma~\ref{lemma2}: ]
For convenience, we take the classical information $r_3=r_1^{-1}$, $r_4=c_1-r_2r_3$ as quantum information $\bra{r_3}_{e_1}\bra{r_4}_{e_2}$. Denote $\mathbf{c}=(c_1,c_2,c_3,c_4)$. Now Bob owns system $Q=(t_1,t_2,g,e_1,e_2)$, and we have a function 
\begin{equation}
    g_B(j,x_i,\mathbf{c})=j+c_1\parallel jp_i+c_2 \parallel jc_3+c_4 \parallel r_1^{-1} \parallel c_1-r_2r_3,
\end{equation}
and its value $b=b_1\parallel b_2 \parallel b_3 \parallel r_3 \parallel r_4$, i.e.,
\begin{equation}
\left\{\begin{matrix} 
j+c_1 =b_1 \\
 jp_i+c_2 =b_2\\
jc_3+c_4=b_3 \\
\left(k_1+p_ik_2+c_3k_3\right)^{-1}=r_3 \\
c_1-\left(k_1c_1+k_2c_2+k_3c_4\right)r_3=r_4
\end{matrix}\right.,
\end{equation}
or,
\begin{equation}\label{eq_c}
\left\{\begin{matrix} 
c_1 =b_1-j \\
c_2 =b_2-jp_i\\
jc_3+c_4=b_3 \\
k_3c_3=r_3^{-1}-k_1-p_ik_2 \\
\left(k_1-r_3^{-1}\right)c_1+k_2c_2+k_3c_4=-r_4r_3^{-1}
\end{matrix}\right..
\end{equation}
Take $c_1,c_2,c_3,c_4$ as unknowns. We have its augmented matrix as
\begin{equation}
\left[\begin{array}{c|c}
\begin{matrix} 
1 &0&0&0\\
0&1&0&0\\
0&0&j&1 \\
0&0&k_3&0\\
k_1-r_3^{-1}&k_2&0&k_3
\end{matrix}&
\begin{matrix} 
b_1-j\\
b_2-jp_i\\
b_3 \\
r_3^{-1}-k_1-p_ik_2\\
-r_4r_3^{-1}
\end{matrix}
\end{array}\right].
\end{equation}
It can be Gaussian eliminated to
\begin{align}
&\left[\begin{array}{c|c}
\begin{matrix} 
1 &0&0&0\\
0&1&0&0\\
0&0&1&0 \\
0&0&0&1\\
0&0&0&0
\end{matrix}&
\begin{matrix} 
b_1-j\\
b_2-jp_i\\
k_3^{-1}\left(r_3^{-1}-k_1-p_ik_2\right) \\
b_3-jk_3^{-1}\left(r_3^{-1}-k_1-p_ik_2\right)\\
\left(k_1-r_3^{-1}\right)b_1+k_2b_2+k_3b_3+r_4r_3^{-1}
\end{matrix}
\end{array}\right].
\end{align}
In order for the equation system to have a solution, there must be
\begin{equation}
    \left(k_1-r_3^{-1}\right)b_1+k_2b_2+k_3b_3+r_4r_3^{-1}=0.
\end{equation}
$\forall b_1,b_2,b_3\in \intset{D}, r_3\in\odd{D}$, we have 
\begin{equation}\label{eq_kr}
    r_4=-r_3\left[\left(k_1-r_3^{-1}\right)b_1+k_2b_2+k_3b_3\right],
\end{equation}
thus $\abs{Im(g_B)}=D^3 \cdot \frac{D}{2}=\frac{D^4}{2}$. If (\ref{eq_kr}) is satisfied, then the general solution of (\ref{eq_c}) is 
\begin{equation}
\left\{\begin{matrix} 
c_1 =b_1-j \\
c_2 =b_2-jp_i\\
c_3=k_3^{-1}\left(r_3^{-1}-k_1-p_ik_2\right) \\
c_4=b_3-jk_3^{-1}\left(r_3^{-1}-k_1-p_ik_2\right)
\end{matrix}\right..
\end{equation}
We have $S_{x_i}=\intset{N}$, $S_{\mathbf{c}}=\intset{D}^3\times \odd{D}$. $\forall (j,x_i)\in \intset{D}\times \intset{N}$, we can uniquely identify a solution of $\mathbf{c}$ as above. Therefore, if $b\in Im(g_B)$, then $\alpha_{bx_i}=D$, $\beta_b=DN$; Otherwise, $\alpha_{b{x_i}}=\beta_b=0$. By (\ref{eq_h}), we have
\begin{align}
&H(Z_B:X_A)\le \log_2{\abs{S_{x_i}}}\nonumber\\
&-\frac{\sum_{b\in\intset{L}}\beta_b\log_2{\beta_b}-\abs{S_{x_i}}\sum_{b\in\intset{L}}\alpha_{b{x_i}}\log_2{\alpha_{b{x_i}}}}{D\abs{S_{x_i}}\abs{S_{\mathbf{c}}}}\nonumber\\
&=\log_2{N}-\abs{Im(g_B)}\frac{DN\log_2\left(DN\right)-ND\log_2D}{DN\frac{D^4}{2}}\nonumber\\
&=m-(d+m)+d=0.
\end{align} 
Therefore, we prove that $I_A=H(Z_B:X_A)=0$, i.e., no information about $x_i$ can be stolen.

On the other hand, if particle $g$ is not involved, i.e., $c_3=c_4=0$, then Bob has $k_1+p_ik_2=r_3^{-1}$, and gets $p_i=k_2^{-1}\left(r_3^{-1}-k_1\right)$. Therefore, its existence is necessary.
\end{proof}

\subsection{Proof of Lemma~\ref{lemma3}}\label{secA.3}

\noindent Similar, we first need the following proposition.

\begin{proposition}[Solution of Modular Multiplication Equation]\label{proposition2} 
Let $D=2^d$, where $d$ is a positive integer. Assume there are three random integers $a,b,c\in\intset{D}$, where $a=2^{d_1}w_1,b=2^{d_2}w_2,c=2^{d_3}w_3$, and $d_1,d_2,d_3\in \intset{d+1}, w_i\in\odd{2^{d-d_i}}$. Denote $0=D=2^d\times 1$. We have
\begin{enumerate}[1)]
    \item The necessary and sufficient condition for $ab\equiv c(\mod D)$ is: $d_1+d_2\ge d_3$ if $d_3=d$; $d_1+d_2=d_3$ and $w_1w_2\equiv w_3(\mod 2^{d-d_3})$ if $d_3<d$.
    \item Given $a, c$, select $b$ randomly, denote $p=\Pr\left(ab\equiv c(\mod D)\right)$. Then, $p=\frac{1}{2^{d-d_1}}$ if $d_3\ge d_1$; $p=0$ if $d_3<d_1$.
\end{enumerate}
\end{proposition}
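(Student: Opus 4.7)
The plan is to argue directly from the factorization $ab = 2^{d_1+d_2} w_1 w_2$, exploiting that $w_1 w_2$ is odd. For part~1), the congruence $ab \equiv c\ (\mod D)$ reduces to a comparison of $2$-adic valuations together with an odd-residue condition; for part~2), I fix $a,c$ and count the $b \in \intset{D}$ satisfying the congruence, then divide by $\abs{\intset{D}} = 2^d$.

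For part~1), I split on whether $d_3 = d$. If $d_3 = d$ (so $c \equiv 0\ (\mod D)$), the condition $2^{d_1+d_2} w_1 w_2 \equiv 0\ (\mod 2^d)$ is, since $w_1 w_2$ is odd, equivalent to $d_1 + d_2 \geq d = d_3$. If $d_3 < d$, then the $2$-adic valuation of $c$ equals $d_3$; the case $d_1+d_2 \geq d$ would give $ab \equiv 0 \not\equiv c$, so we must have $d_1+d_2 < d$, and matching $2$-adic valuations on both sides of $ab \equiv c\ (\mod 2^d)$ forces $d_1+d_2 = d_3$. Dividing by $2^{d_3}$ then leaves exactly $w_1 w_2 \equiv w_3\ (\mod 2^{d-d_3})$.

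For part~2), if $d_3 < d_1$, then $d_3 < d_1 \leq d$ puts us in the $d_3 < d$ regime, but the constraint $d_1+d_2 = d_3$ cannot hold for any $d_2 \geq 0$, so no $b$ works and $p = 0$. If $d_3 \geq d_1$, I enumerate $b$ by its $2$-adic valuation $d_2$ (handling $b=0$ via the convention $d_2 = d$). When $d_3 = d$, the admissible range $d_2 \in \set{d-d_1,\ldots,d}$ contributes $\sum_{d_2=d-d_1}^{d-1} 2^{d-d_2-1} + 1 = (2^{d_1}-1)+1 = 2^{d_1}$ valid $b$'s (the trailing $+1$ being $b=0$). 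When $d_3 < d$, the value $d_2 = d_3-d_1$ is forced, and the condition pins $w_2$ to the single class $w_1^{-1} w_3\ (\mod 2^{d-d_3})$; within $\odd{2^{d-d_2}}$ there are $2^{d-d_2}/2^{d-d_3} = 2^{d_1}$ representatives of this class, all of them odd because the class itself is odd and $2^{d-d_3}$ is even. In both subcases the count is $2^{d_1}$, giving $p = 2^{d_1}/2^d = 1/2^{d-d_1}$.

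The main obstacle will be careful bookkeeping of the $b=0$ boundary under the convention $0 = 2^d \cdot 1$: for $d_3 = d$ this element must be explicitly included to recover the clean count $2^{d_1}$, while for $d_3 < d$ it is automatically excluded since then $d_2 = d_3-d_1 < d$. Beyond this housekeeping, every step reduces to an elementary $2$-adic valuation argument or to counting residue classes in a nested dyadic interval, so I do not expect any serious technical difficulties.
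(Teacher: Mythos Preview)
Your proposal is correct and follows essentially the same line as the paper: part~1) is the same $2$-adic valuation match, and part~2) uses the characterization from part~1) to count admissible $b$'s in each of the three regimes $d_3=d$, $d_1\le d_3<d$, $d_3<d_1$. The only notable difference is that in the case $d_1\le d_3<d$ the paper factors the probability as $\Pr(d_2=d_3-d_1)\cdot\Pr(w_2\equiv w_1^{-1}w_3)$ and bounds the number of admissible $k$ via an explicit interval estimate, whereas your observation that every lift of the odd residue class $w_1^{-1}w_3\ (\mod 2^{d-d_3})$ inside $\intset{2^{d-d_2}}$ is automatically odd gives the count $2^{d_1}$ in one stroke---a slightly cleaner route to the same answer.
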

\begin{proof}
\begin{enumerate}[1)]
    \item $ab\equiv c(\mod D)$ is equivalent to $2^{d_1+d_2}w_1w_2=2^{d_3}w_3+k2^d$, where $k$ is an integer. If $d_3=d$, then $w_3=1$, and  $2^{d_1+d_2}w_1w_2=(k+1)2^d$, $d_1+d_2\ge d$; conversely, if $d_1+d_2\ge d$, then $2^{d_1+d_2}w_1w_2=(k+1)2^d\equiv c(\mod D)$. If $d_3<d$, then $2^{d_1+d_2}w_1w_2=2^{d_3}\left(w_3+k2^{d-d_3}\right)$, where $w_3+k2^{d-d_3}$ is odd. Therefore ,we have $d_1+d_2=d_3$, $w_1w_2=w_3+k2^{d-d_3}$, i.e., $w_1w_2\equiv w_3(\mod 2^{d-d_3})$.
    \item {a) If $d_3=d$, then 
        \begin{align}
            &p_1=\Pr\left(ab\equiv c(\mod D)\lvert d=d_3\right)\nonumber\\
            &=\Pr\left(d_1+d_2\ge d_3\right)=\Pr\left(d_2\ge d-d_1\right)\nonumber\\
            &=\Pr\left(2^{d-d_1}\lvert b\right)=\frac{2^{d-(d-d_1)}}{2^d}=\frac{1}{2^{d-d_1}}.
        \end{align}
    Note that there are $2^{d-(d-d_1)}$ multiples of $2^{d-d_1}$ in $\intset{D}$.
    
    b) If $d>d_3\ge d_1$, the conditions are $d_2=d_3-d_1$ and $w_2\equiv w_1^{-1}w_3(\mod 2^{d-d_3})$, where $w_1^{-1}\in \odd{2^{d-d_3}}$ is the multiplicative inverse of $w_1\mod 2^{d-d_3}$. First, $d_2=d_3-d_1$ means that $2^{d_3-d_1}\lvert b$ and $2^{d_3-d_1+1}\nmid b$. Of the $2^{d-(d_3-d_1)}$ multiples of $2^{d_3-d_1}$, half are multiples of $2^{d_3-d_1+1}$, thus
        \begin{align}
            &p_{2a}=\Pr\left(d_2=d_3-d_1\right)\nonumber\\
            &=\frac{2^{d-(d_3-d_1)}2^{-1}}{2^d}=\frac{1}{2^{d_3-d_1+1}},
        \end{align}
    If $d_2=d_3-d_1$, we need 
    \begin{equation}
        w_2=w_1^{-1}w_3+k2^{d-d_3}\in\odd{2^{d-d_3+d_1}},
    \end{equation}
    i.e., $1\le w_1^{-1}w_3+k2^{d-d_3}\le 2^{d-d_3+d_1}-1$. No matter how much $w_1^{-1}w_3$ equals, we can assume that
    \begin{equation}
        l2^{d-d_3}+1\le w_1^{-1}w_3\le (l+1)2^{d-d_3}-1,
    \end{equation}
    where $l$ is an integer. Then, we can deduce that
    \begin{align}
        &-(l+1)2^{d-d_3}+1\le -w_1^{-1}w_3\le k2^{d-d_3}\nonumber\\
        &\le 2^{d-d_3+d_1}-1-w_1^{-1}w_3\le 2^{d-d_3+d_1}-1 -l2^{d-d_3}-1\nonumber\\
        &=2^{d-d_3}(2^{d_1} -l)-2,
    \end{align}
    thus $-l\le k\le 2^{d_1} -l-1$, i.e., $k$ has $2^{d_1}$ possible values. Then 
        \begin{align}
        &p_{2b}=\Pr\left(w_2=w_1^{-1}w_3+k2^{d-d_3}\lvert d_2=d_3-d_1\right)\nonumber\\
            &=\frac{2^{d_1}}{2^{d-d_3+d_1-1}}=\frac{1}{2^{d-d_3-1}}.
        \end{align}
    Note that there are half odd integers in $\intset{2^{d-d_3+d_1}}$. Now
        \begin{align}
            &p_{2}=\Pr\left(ab\equiv c(\mod D)\lvert d>d_3\ge d_1\right)=p_{2a}\cdot p_{2b}\nonumber\\
            &=\frac{1}{2^{d_3-d_1+1}}\cdot \frac{1}{2^{d-d_3-1}}=\frac{1}{2^{d-d_1}}=p_{1}.
        \end{align}
    Thus $p=p_1=p_2=\frac{1}{2^{d-d_1}}$.
    
    c) If $d_3<d_1$, then $d_1+d_2\ne d_3$, thus $p=0$.
    }
\end{enumerate}
In total, we prove this proposition.
\end{proof}

Now we can prove Lemma~\ref{lemma3}.

\begin{proof}[Proof of Lemma~\ref{lemma3}:]
There are three possible cases:
    \begin{enumerate}[a)]
        \item {Using $t_1$ to steal $s_i$ and $t_2$ to steal $q_i$ simultaneously. 
        
        Assume that Alice prepares $\sfrac{D}\sum_{j\in\intset{D}}\bra{j}_{t_1}, \sfrac{D}\sum_{j\in\intset{D}}\bra{j}_{t_2}$.
        For convenience, we set $h,g$ in $\bra{0}$, since they are used to protect herself, not Bob. In Step 2 of Operation stage, the state will be
         \begin{align}
&\frac{1}{D}\sum_{j,j'\in\intset{D}}\bra{j}_{t_1}\bra{j'}_{t_2}\bra{0}_{g}\rightarrow\nonumber\\
&\frac{1}{D}\sum_{j,j'\in\intset{D}}\bra{jk}_{t_1}\bra{j'}_{t_2}\bra{jk_1+j'k_2}_{g}.
        \end{align}
        Now $t_1,t_2,g$ are entangled. In Step 4 of Operation stage, no matter what value $r_3,r_4$ she sends are, the total state will be 
        \begin{align}
&\frac{1}{D}\sum_{j,j'\in\intset{D}}\omega^{js_i+j'q_i}\bra{j}_{t_1}\bra{j'}_{t_2}\bra{jk_1+j'k_2}_{g}\rightarrow\nonumber\\
&\sfrac{D}\sum_{j\in\intset{D}}\omega^{js_i}\bra{j}_{t_1}\sfrac{D}\sum_{j'\in\intset{D}}\omega^{j'q_i}\nonumber\\
&\bra{j'}_{t_2}\bra{(jk_1r_3+j'k_2r_3+r_4)\oplus j}_{g}.
        \end{align}
        Now Bob will measure $g$. For each $j\in\intset{D}$, we have
        \begin{align}
        &\Pr\left((jk_1r_3+j'k_2r_3+r_4)\oplus j=0\right)\nonumber\\
        &=\Pr\left(jk_1r_3+j'k_2r_3+r_4=j\right)\nonumber\\
        &=\Pr\left(j'=(k_2r_3)^{-1}(j(1-k_1r_3)-r_4)\right)=\frac{1}{D}.
        \end{align}
        Thus, he will find Alice's attack easily, with probability $1-\frac{1}{D}$.
        }
        \item {Using $t_1$ to steal $s_i$ only. 
        
        Similarly, Alice prepares $\sfrac{D}\sum_{j\in \intset{D}}\bra{j}_{t_1},\bra{0}_{t_2},\bra{0}_{g}$.
        In Step 2 of Operation stage, it will be
        \begin{align}
            \sfrac{D}\sum_{j\in \intset{D}}\bra{j}_{t_1}\bra{0}_{t_2}\bra{0}_{g}\rightarrow \sfrac{D}\sum_{j\in \intset{D}}\bra{j}_{t_1}\bra{0}_{t_2}\bra{jk_1}_{g},
        \end{align}
        and in Step 4 of Operation stage, it will be
        \begin{align}
            &\sfrac{D}\sum_{j\in \intset{D}}\omega^{js_i}\bra{j}_{t_1}\bra{jk_1}_{g}\rightarrow \nonumber\\
            &\sfrac{D}\sum_{j\in \intset{D}}\omega^{js_i}\bra{j}_{t_1}\bra{(jk_1r_3+r_4)\oplus j}_{g}.
        \end{align}
        When Bob measure $g$, he will get $\bra{0}$ if $jk_1r_3+r_4=j$, i.e., $j(1-k_1r_3)=r_4$. Let $S_J=\{j|j(1-k_1r_3)=r_4,j\in\intset{D}\}$. Bob can find the cheating with probability $1-\frac{\abs{S_J}}{D}$, and ignore it with $\frac{\abs{S_J}}{D}$. If he ignores, the state of $t_1$ will be
        \begin{equation}
    \bra{\psi_{s_i}}=\sfrac{\abs{S_J}}\sum_{j\in S_J}\omega^{js_i}\bra{j}.
        \end{equation}
        $t_1$ will return to Alice in Step 5 of Operation stage. Now the density operator is 
        \begin{equation}
        \begin{aligned}
\rho_{s_i}=\outerpro{\psi_{s_i}}{\psi_{s_i}}=\frac{1}{\abs{S_J}}\sum_{j',j\in S_J}\omega^{(j'-j)s_i}\outerpro{j'}{j}
        \end{aligned}
        \end{equation}
        Since $\rho_{s_i}=\outerpro{\psi_{s_i}}{\psi_{s_i}}$, $S\left(\rho_{s_i}\right)=0$. For Alice who has no knowledge of odd integer $s_i=4v_i-2y_i-1\mod D$, $s_i$ can be considered to follow the uniform distribution on $\odd{D}$, i.e., $p_{s_i}=\frac{2}{D}$. Let $s_i=2a+1$, then the total operator is
        \begin{align}\label{eq_rho}
&\rho=\sum_{s_i\in\odd{D}}\frac{2}{D}\rho_{s_i}\nonumber\\
&=\frac{2}{D\abs{S_J}}\sum_{j\in S_J}\sum_{j'\in S_J}\sum_{s_i\in\odd{D}}\omega^{(j'-j)s_i}\outerpro{j'}{j}.
        \end{align}
        Note that if $j'=j$ then $\sum_{s_i\in\odd{D}}\omega^{(j'-j)s_i}=\frac{D}{2}$;
        Or, if $j'-j\equiv \frac{D}{2}(\mod D)$, then
        \begin{equation}
        \begin{aligned}
\sum_{s_i\in\odd{D}}\omega^{(j'-j)s_i}=\sum_{s_i\in\odd{D}}e^{\imath \pi s_i}=\sum_{s_i\in\odd{D}}(-1)=-\frac{D}{2};
        \end{aligned}
        \end{equation}
        Otherwise, 
        \begin{align}
&\sum_{s_i\in\odd{D}}\omega^{(j'-j)s_i}
=\sum_{a\in\intset{\frac{D}{2}}}e^{\frac{\imath 2\pi}{D}\cdot(j'-j)(2a+1)}\nonumber\\
&=\omega^{j'-j}\sum_{a\in\intset{\frac{D}{2}}}e^{\frac{\imath 2\pi}{D}\cdot2(j'-j)a}=\omega^{j'-j}\frac{1-e^{\imath 2\pi(j'-j)}}{1-e^{\frac{\imath 2\pi}{D}\cdot2(j'-j)}}\nonumber\\
&=0.
        \end{align}
        Since $k_1r_3$ is odd, we have $D|(1-k_1r_3)\frac{D}{2}$, thus
        \begin{align}
(1-k_1r_3)\left(j+\frac{D}{2}\right)\equiv (1-k_1r_3)j\equiv r_4(\mod D).
        \end{align}
        I.e., if $j\in S_J$ then $j+\frac{D}{2}\mod D\in S_J$. Now we can calculate the eigenvalue of $\rho$. By (\ref{eq_rho}), we have
        \begin{equation}
\rho=\frac{1}{\abs{S_J}}\sum_{j\in S_J}\left(\bra{j}-\bra{j+\frac{D}{2}}\right)\ket{j}.
        \end{equation}
        $\forall j\in S_J$, denote $\bra{\phi_{j\pm}}=\bra{j}\pm\bra{j+\frac{D}{2}}$, then
        \begin{equation}
\rho\bra{\phi_{j+}}=0\bra{\phi_{j+}},\rho\bra{\phi_{j-}}=\frac{2}{\abs{S_J}}\bra{\phi_{j+}}.
        \end{equation}
        I.e., $\rho$ has $\frac{\abs{S_J}}{2}$ non-zero eigenvalues $\frac{2}{\abs{S_J}}$. Then
        \begin{equation}
S\left(\rho\right)=-\frac{\abs{S_J}}{2}\cdot\frac{2}{\abs{S_J}}\log_2 \frac{2}{\abs{S_J}}=\log_2\abs{S_J}-1.
        \end{equation}
        If Alice measures $t_1$ to obtain any result $Z_A$, then
        \begin{align}
        &H\left(Z_A:s_i|\rm{Alice\ passes\ the\ test}\right)\nonumber\\
        &\le S\left(\rho\right)-\sum_{s_i\in \odd{D}}p_{s_i}S\left(\rho_{s_i}\right)=\log_2\abs{S_J}-1
        \end{align}
        Since the probability she pass Bob's honesty test is $\frac{\abs{S_J}}{D}$, then the leakage degree of Bob's privacy, i.e., the average information Alice obtains is
        \begin{equation}\label{eq_I}
        \begin{aligned}
        I_B=H(Z_A:s_i)\le \sum_{\abs{S_J}}\frac{\abs{S_J}}{D}p_{\abs{S_J}}\log_2\abs{S_J}-1.
        \end{aligned}
        \end{equation}
        Let's find $p_{\abs{S_J}}$. Let $1-k_1k_3=2^{d_l}w_l$, $r_4=2^{d_r}w_r$, $j=2^{d_j}w_j$. By Proposition~\ref{proposition2}, $j(1-k_1r_3)=r_4$ only if $d_r\ge d_l$, with probability $\frac{1}{2^{d-d_l}}$, i.e., $\abs{S_J}=\frac{2^d}{2^{d-d_l}}=2^{d_l}$; Otherwise, if $d_r<d_l$, then $\abs{S_J}=0$. To maximize $I_B$, Alice should set $d_r=d$, i.e., $r_4=0$, so that $d_r\ge d_l$ always holds. Note that $1-k_1r_3$ is a random even, if $d_l<d$, then
        \begin{align}
            &p_{\abs{S_J}}=\Pr\left(2^{d_l}\lvert(1-k_1r_3),\ 2^{d_l+1}\nmid(1-k_1r_3)\right)\nonumber\\
            &=\frac{2^{d-d_l-1}}{2^{d-1}}=\frac{1}{2^{d_l}};
        \end{align}
        If $d_l=d$, then $1-k_1r_3=0$, and $p_{\abs{S_J}}=\frac{2}{2^{d-1}}$. Therefore,
        \begin{equation}
        p_{\abs{S_J}}=\left\{
        \begin{matrix}
        \frac{1}{\abs{S_J}},&\ \mathrm{if}\ \abs{S_J}< D\\
        \frac{2}{\abs{S_J}},&\ \mathrm{if}\ \abs{S_J}=D
        \end{matrix}\right..
        \end{equation}
        Now by (\ref{eq_I}), we have 
        \begin{align}
        &I_B\le \sum_{\abs{S_J}}\frac{\abs{S_J}}{D}p_{\abs{S_J}}\log_2\abs{S_J}-1\nonumber\\
        &=\sum_{d_l=1}^{d-1}\frac{2^{d_l}}{D}\cdot\frac{1}{2^{d_l}}\left(\log_2 2^{d_l}-1\right)+\frac{2^d}{D}\cdot\frac{2}{2^d}\left(\log_2 2^d-1\right)\nonumber\\
        &=\frac{1}{D}\sum_{d_l=1}^{d-1}(d_l-1)+\frac{2}{D}(d-1)=\frac{1}{D}\frac{(d-1)(d-2)+4(d-1)}{2}\nonumber\\
        &=\frac{1}{D}\frac{(d-1)(d+2)}{2}=\frac{d^2+d-2}{2\cdot2^d}=O\left(\frac{d^2}{2^d}\right).
        \end{align}
        }
        
        \item {Using $t_2$ to steal $q_i$ only.
        
        This case is almost identical to b). We can calculate the total density operator $\rho=\sum_{y_i\in\intset{N}}\frac{1}{N}\rho_y$, and find that the upper bound in this case equals to the one in b) if $1-k_2r_3\not\equiv 0(\mod D)$, because if $j-j'\equiv\mathrm{odd}(\mod D)$, then $j,j'$ cannot satisfy $j(1-k_2r_3)\equiv r_4(\mod D)$ at the same time. In this case, we have $I_B=H(Z_A:s_i)=O\left(\frac{d^2}{2^d}\right)$.

        If $1-k_2r_3\equiv 0(\mod D)$ with probability $\frac{2}{D}$, we can deduce a new upper bound $d$, since at most $d$ bits of classical information can be extracted from the $d$-qubit state\cite{2000NielsenQua}. Considering the probability $\frac{1}{2^{d-1}}$, the average upper bound is
        \begin{equation}
        \begin{aligned}
            I_B\le \left( 1-\frac{1}{2^{d-1}}\right)O\left(\frac{d^2}{2^d}\right)+\frac{1}{2^{d-1}}d\le O\left(\frac{d^2}{2^d}\right).
        \end{aligned}
        \end{equation}
        }
    \end{enumerate}
    In total, the leakage degree of Bob's privacy is $I_B=O\left(\frac{d^2}{2^d}\right)$, where the asymptotic coefficient is $\frac{1}{2}$.
\end{proof}

\subsection{Proof of Lemma~\ref{lemma4}}\label{secA.4}
\begin{proof}[]Denote random variables $M=(M_1,M_2,\cdots,M_n)$ and $X_B=(\mathbf{y},v)$. At first, we have\cite{1978WynerDef}
\begin{align}
I_B=H(M:X_B|u)=H(X_B:M,u)-H(X_B:u).
\end{align} 
Since $u$ is a function of $X_B$ (and $M$, since Alice can calculate $u$ only with $M$), we have $H(X_B:u)=H(u)$, $H(u|X_B)=0$ and $H(M,u)=H(M)$ (Theorem 11.3 of Ref.~\cite{2000NielsenQua}). By the chaining rule for conditional entropies (Theorem 11.4 of Ref.~\cite{2000NielsenQua}), $H(M,u|X_B)=H(u|M,X_B)+H(M|X_B)=H(M,X_B)-H(X_B)$. Then
\begin{align}
&I_B=H(M:X_B|u)=H(X_B:M,u)-H(X_B:u)\nonumber\\
&=H(M,u)-H(M,u|X_B)-H(u)\nonumber\\
&=H(M)-H(M,X_B)+H(X_B)-H(u)\nonumber\\
&=H(X_B)-H(X_B|M)-H(u).
\end{align}
Note that $\hat{M}_i\equiv M_i-2x_i\equiv 4v_i+4x_iy_i(\mod D)$ and $M_i$ correspond one-to-one, then $H(X_B|\hat{M})=H(X_B|M)$, where $\hat{M}=(\hat{M}_1,\cdots,\hat{M}_n)$. $\forall \hat{M_i}\in 4\intset{N},y_i\in \intset{N}$, only if $v_i$ satisfy $4v_i\equiv \hat{M_i}-4x_iy_i(\mod D)$, $4v\equiv 4\sum_{i=1}^n v_i(\mod D)$ is valid. Therefore, there is only one possible $4v\in\intset{D}$, and thus one $v\in\intset{N}$. Then we have $p(X_B|\hat{M})=\frac{1}{N^{n}}$. On the other hand, $\forall y_i,v\in \intset{N}$, if $v_1,v_2,\cdots,v_{n-1}\in\intset{N}$ have been selected, then $4v_n\equiv 4v-4\sum_{i=1}^{n-1} v_i(\mod D)$ is determined, and so is $\hat{M}$. Therefore, $p(\hat{M}|X_B)=\frac{1}{N^{n-1}}$. By $p(X_B)=\frac{1}{N^{n+1}}$, we have
\begin{align}
&H(X_B|\hat{M})=-\sum_{X_B\in\intset{N}^{n+1}}\sum_{\hat{M}\in 4\intset{N}^n}p(X_B,\hat{M})\log_2{p(X_B|\hat{M})}\nonumber\\
&=-\sum_{X_B\in\intset{N}^{n+1}}\sum_{v_1,v_2,\cdots,v_{n-1}\in\intset{N}}p(\hat{M}|X_B)p(X_B)\log_2{p(X_B|\hat{M})}\nonumber\\
&=-{N}^{n+1}{N}^{n-1}\frac{1}{N^{n-1}}\frac{1}{N^{n+1}}\log_2{\frac{1}{N^n}}=nm.
\end{align}
Similarly, because $\forall u\in\intset{N}$, we can randomly select $\mathbf{y}\in\intset{N}^n$, then $v=u-\mathbf{x}\cdot\mathbf{y}$. That is, each $u$ corresponds to $N$ possible values of $y$, i.e., $p(u)=\frac{1}{N}$. Then $H(X_B)=(n+1)m$, $H(u)=m$, and $I_B=(n+1)m-nm-m=0$.
\end{proof}

\section{Implementation of Quantum Gates}\label{secB}
\noindent Here we analyze the complexity of the quantum operations we used, and how to implement them in simulation. 

\subsection{Complexity Analysis}\label{secB.1}
\noindent
\begin{enumerate}[1)]
\item As shown in figure~\ref{fig_QFT}, A $\QFT$ (or $\QFIT$) gate can be decomposed into $O\left(d^2\right)$ $\H$ and $\CPROT$ gates\cite{2000NielsenQua}.

\item {Let $a=\sum_{i\in \intset{d}} a_i2^i$, $b=\sum_{k\in \intset{d}} b_k2^k$. As shown in figure~\ref{fig_ROT}, we have $\ROT_h=\prod_{i,k\in \intset{d}}\PROT(i+k)^{b_i}_{h_i}$, where $b_i$ means the controlling bit, i.e., if $b_i=1$, perform $\PROT(i+k)_{h_i}$, otherwise do nothing. Because
    \begin{align}
        &\prod_{i,k\in \intset{d}}\PROT(i+k)^{b_i}_{h_i}\bra{a}_{h}=\prod_{i,k\in \intset{d}}e^\frac{\imath 2\pi 2^{i+k}a_ib_i}{D}\bra{a}_{h}\nonumber\\
        &=e^\frac{\imath 2\pi \sum_{i,k\in \intset{d}}2^{i+k}a_ib_i}{D}\bra{a}_h=e^\frac{\imath 2\pi ab}{D}\bra{a}_h\nonumber\\
        &=\ROT_h\bra{a}_h.
    \end{align}
Thus, the complexity of $\ROT$ gate is also $O\left(d^2\right)$.}

\item As elementary arithmetic gates, the complexity of $\SUM$ and $\BSUM$ are both $O(d)$, if we only use $\Toffoli$ and $\CNOT$ gates corresponding to classical AND and XOR gate. Generally, it requires an auxiliary qubit as a carrier.

\item {To realize $\MUL$, introduce an auxiliary register $\bra{0}_t$, then
    \begin{align}
&\bra{a}_h\bra{0}_t\onarrow{\SUM(2^0b)^{h_0}_t}\bra{a}_h\bra{2^0a_0b}_t\onarrow{\SUM(2^1b)^{h_1}_t}\nonumber\\
        &\cdots\onarrow{\SUM(2^{d-1}b)^{h_{d-1}}_t}=\bra{a}_h\bra{ab}_t,
    \end{align}
where $\SUM^{h_i}$ means $\SUM$ is controlled by qubit $h_i$. We denote the above gate as $\BMUL(b)_{(h,t)}$. To realize $\MUL$, use $\SWAP_{(h,t)}=\XOR_{(h,t)}\XOR_{(t,h)}\XOR_{(h,t)}: \bra{a}_h\bra{b}_t\rightarrow \bra{b}_h\bra{a}_t$. Then
    \begin{align}
&\bra{a}_h\bra{0}_t\onarrow{\BMUL(b)_{(h,t)}}\bra{a}_h\bra{ab}_t\onarrow{\BMUL(-b^{-1})_{(t,h)}}\nonumber\\
&\bra{0}_h\bra{ab}_t\onarrow{\SWAP_{(h,t)}}\bra{ab}_h\bra{0}_t,
    \end{align}
as Shor described\cite{1994ShorAlg}. Its complexity is $O(d^2)$.}

\item Obviously, an $\XOR$ gate can decomposed into $d$ $\CNOT$ gates, as shown in figure~\ref{fig_XOR}. Thus its complexity is $O(d)$. 
\end{enumerate}

\subsection{Implementation of Quantum Gates}\label{secB.2}
\noindent To save valuable qubit resources, we use Draper's Transform Adder\cite{2000DraperAdd} to implement $\SUM$ and $\BSUM$. It need no auxiliary qubits, but increases the complexity. In this method, we have $\SUM(b)=\QFIT \ROT(b)\QFT$ and $\BSUM_{(h,t)}=\QFIT_t \BROT_{(h,t)}\QFT_t$, with complexity $O(d^2)$. The gate $\BROT$ is defined as $\BROT_{(h,t)}:\bra{a}_h\bra{b}_t\rightarrow \omega^{ab}\bra{a}_h\bra{b}_t$, by replacing $\PROT(i+k)^{b_i}_{h_i}$ with $\CPROT(i+k)_{(h_i,t_i)}$ in $\ROT_h$. The circuit of them are shown in figure~\ref{fig_SUM} and \ref{fig_BSUM} respectively.

In addition, the general $\MUL$ gate needs $O(d)$ auxiliary qubits. Here we point out that for the special modulus $D=2^d$, we can design a special circuit which requires no auxiliary qubits. To illustrate our design idea, we first define a new gate $\SUM(b\mod 2^l)$ as a $\mod 2^l$ summation gate on register $(h_{d-1},h_{d-2},\cdots,h_{d-l})$. E.g., $\SUM(b\mod 2^2)$ will change $\bra{a_{d-1}}\bra{a_{d-2}}$ to $\bra{\sum_{i\in\intset{2}}2^ia_{i+d-2} +b \mod 2^2}$. Denote $w=\frac{b-1}{2}$ (must be an integer). We have
    \begin{align}\label{eq_mul_decom}
        &\MUL(b)=\SUM(w\mod 2^{d-1})^{h_{0}}\SUM(w\mod 2^{d-2})^{h_{1}}\nonumber\\
        &\cdots\SUM(w\mod 2^2)^{h_{d-3}}\SUM(w\mod 2^1)^{h_{d-2}},
    \end{align}
as shown in figure~\ref{fig_MUL}. Now we prove its correctness.

\begin{proof} We have
    \begin{align}
            &ab\equiv a(2w+1)\equiv a+2w\sum_{i=0}^{d-1}2^{i}a_i\equiv a+\sum_{i=1}^{d}2^{i}wa_{i-1}(\mod 2^d)\nonumber\\
            &\equiv a+\sum_{i=1}^{d-1}2^{i}wa_{i-1}(\mod 2^d).
    \end{align}
    Let $a^{(0)}=a$ and $a^{(l+1)}=a^{(l)}+wa_{d-1-l}2^{d-l} \mod D$, then we have $a^{(d)}=a^{(d-1)}+wa_{0}2^{0} \mod D=ab$.
    Let $a^{(l+1)}=\sum_{i=0}^{d-1}a_i^{(l+1)}2^i$, $a^{(l)}=\sum_{i=0}^{d-1}a_i^{(l)}2^i$, then
        \begin{align}
        &\sum_{i=0}^{d-1}a_i^{(l+1)}2^i=a^{(l)}+wa_{d-1-l}2^{d-l} \mod D\nonumber\\
        &=\sum_{i=0}^{d-1}a_i^{(l)}2^i+wa_{d-1-l}2^{d-l} \mod D\nonumber\\
        &=\sum_{i=0}^{d-l-1}a_i^{(l)}2^i+\sum_{i=d-l}^{d-1}a_i^{(l)}2^i+wa_{d-1-l}2^{d-l}-k2^d\nonumber\\
        &=\sum_{i=0}^{d-l-1}a_i^{(l)}2^i+2^{d-l}\left(\sum_{i=0}^{l-1}a_{i+d-l}^{(l)}2^l+wa_{d-1-l}-k2^l\right)\nonumber\\
        &=\sum_{i=0}^{d-l-1}a_i^{(l)}2^i\nonumber\\
        &\ \ \ \ +2^{d-l}\left(\sum_{i=0}^{l-1}a_{i+d-l}^{(l)}2^l+wa_{d-1-l}\mod 2^l\right),
        \end{align}
    where $k$ is integer. Note that $a_i^{(l+1)}=a_i^{(l)}$, $0\le i \le d-l-1$, thus 
    \begin{align}
        &\sum_{i=0}^{d-l-1}a_i^{(l+1)}2^i=\sum_{i=0}^{d-l-1}a_i^{(l)}2^i=\nonumber\\
        &\cdots=\sum_{i=0}^{d-l-1}a_i^{(0)}2^i=\sum_{i=0}^{d-l-1}a_i2^i.
    \end{align}
    Therefore,
    \begin{align}
        &a^{(l+1)}=\sum_{i=0}^{d-l-1}a_i^{(l)}2^i\nonumber\\
        &\ \ \ \ +2^{d-l}\left(\sum_{i=0}^{l-1}a_{i+d-l}^{(l)}2^l+wa_{d-1-l}^{(l)}\mod 2^l\right),
    \end{align}
    which means that we can add $w$ on $\bra{a_{d-1}}\cdots\bra{a_{d-l}}\mod 2^l$ if $a_{d-l-1}=1$, for each $l=0,1,\cdots,d-1$. That's why (\ref{eq_mul_decom}) and the circuit in figure~\ref{fig_MUL} are correct.
\end{proof}

\end{document}